\documentclass{article}

\usepackage[a4paper,top=2.2cm,bottom=2.6cm,left=2.2cm,right=2.2cm,heightrounded,includefoot]{geometry}

\usepackage{amsmath, amssymb, braket, amsthm}
\usepackage{tikz,hyperref,cleveref}
\usepackage[vlined,linesnumbered]{algorithm2e}
\usepackage{authblk}

\hypersetup{
  colorlinks=true,
  linkcolor=blue,
  citecolor=teal,
  urlcolor=blue
}

\usepackage{thmtools}

\declaretheorem[style=plain]{theorem}
\declaretheorem[style=plain,numberlike=theorem]{lemma,corollary,proposition,claim}

\title{A Space-space Trade-off for Directed st-Connectivity}
\author{Roman Edenhofer}
\affil{Université Paris Cité, CNRS, IRIF, Paris, France}
\date{}

\begin{document}

\maketitle

\begin{abstract}
    We prove a space-space trade-off for directed $st$-connectivity in the catalytic space model.
    For any integer $k \le n$, we give an algorithm that decides directed $st$-connectivity using~$O(\log n \cdot \log k+\log n)$ regular workspace and~$O\!\left(\frac{n}{k} \cdot \log^2 n\right)$ bits of catalytic memory.
    This interpolates between the classical~$O(\log^2 n)$-space bound from Savitch's algorithm
    and a catalytic endpoint with~$O(\log n)$ workspace and $O(n\cdot \log^2 n)$ catalytic memory.

    As a warm-up, we present a catalytic variant of Savitch's algorithm achieving the endpoint above.
    Up to logarithmic factors, this matches the smallest catalyst size currently known for catalytic logspace algorithms, due to Cook and Pyne (ITCS~2026).
    Our techniques also extend to counting the number of walks from $s$ to $t$ of a given length $\ell\leq n$.
\end{abstract}

\section{Introduction}

Directed $st$-connectivity, denoted $\mathrm{STCON}$, is the decision problem where we are given a directed graph $G$ on~$n$ vertices and two designated vertices $s,t\in V(G)$, and where we are asked whether there exists a directed path from $s$ to $t$.
The problem is of fundamental importance in space-bounded complexity theory because it is complete for \emph{nondeterministic logspace}, $\mathsf{NL}$.

The best-known deterministic strategy for solving $\mathrm{STCON}$ in small space is Savitch's algorithm~\cite{Sav70}, which uses only $O(\log^2 n)$ bits of workspace.
Improving on Savitch's algorithm, or proving its optimality, would constitute a major breakthrough in our understanding of space-bounded computation.

The problem has also been studied in the \emph{catalytic logspace} ($\mathsf{CL}$) model, where an algorithm has $O(\log n)$ bits of regular workspace and, in addition, access to a polynomial-size auxiliary bitstring, the \emph{catalyst}, that may be used during the computation but must be restored to its initial content when the computation halts.
The model was introduced by Buhrman, Cleve, Kouck\'y, Loff and Speelman \cite{BCKLS14} and is surprisingly powerful. 
In the original paper, the authors already observed that catalytic logspace can simulate logarithmic-depth threshold $\mathsf{TC}^1$-circuits.
Since $\mathsf{TC}^1$ contains $\mathsf{NL}$, and in fact even the counting class $\mathsf{\#L}$, it is long-known that~$\mathrm{STCON}$ can be solved in catalytic logspace.
A natural question however is, how large the catalyst must be in order to solve $\mathrm{STCON}$ (and thus to simulate $\mathsf{NL}$ computations).

Recently, Cook and Pyne~\cite{CP25} gave a catalytic logspace algorithm for~$\mathrm{STCON}$ using a catalyst of size~$\widetilde{O}(n)$.
They also remarked that an algorithm with sublinear catalyst would imply a randomized polynomial-time, sublinear-space algorithm for $\mathrm{STCON}$.
While Barnes, Buss, Ruzzo, and Schieber~\cite{BBRS98} gave a polynomial-time algorithm using space $O(n/2^{\Theta(\sqrt{\log n})})$, it remains a long-standing open problem to obtain a polynomial-time algorithm running in truly sublinear space, i.e.\ $O(n^{1-\varepsilon})$ for some $\varepsilon>0$.
Some researchers have even conjectured that this is impossible (see e.g.~\cite{Wig92}).

In this work, we investigate a trade-off between the amount of regular workspace and the amount of catalytic memory needed to solve $\mathrm{STCON}$.
We show that one can interpolate smoothly between Savitch's classical $O(\log^2 n)$ space bound and a catalytic endpoint with $O(\log n)$ workspace and $O(n\cdot \log^2 n)$ bits of catalytic memory, matching the catalyst size of Cook and Pyne up to logarithmic factors.

Formally, our main result is the following.

\begin{restatable}{theorem}{STCONTradeoff}\label{thm: STCON Space-space tradeoff}
    Given a directed graph $G$ of size $n$, vertices $s,t\in V(G)$, a length $\ell\in[n]$ and a parameter~$k\in[n]$, there is an algorithm running in $\mathsf{CSPACE}(O(\log \ell\cdot\log k+\log n),\;O(\frac{n}{k}\cdot\log^2 n))$ that outputs the number of length-$\ell$ walks from $s$ to $t$.
    In particular, for any $k\in [n]$,
    \[
        \mathrm{STCON}\in \mathsf{CSPACE}\left(O\left(\log n\cdot\log k+\log n\right),\;O\left(\frac{n}{k}\cdot\log^2 n\right)\right).
    \]
\end{restatable}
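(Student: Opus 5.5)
We count length-$\ell$ walks, i.e.\ compute $(A^\ell)_{s,t}$ for the adjacency matrix $A$, working modulo a prime $p$ with $p > n^\ell$. Since $\ell\le n$, such a $p$ has $O(n\log n)$ bits, which is far too long to keep in workspace. So the plan is Chinese remaindering: compute the count modulo many small primes $p$ of $O(\log n)$ bits and reconstruct bits of the answer in logspace. Each residue is then recomputed on demand, which is the standard logspace CRT trick (Chiu--Davida--Litow) and costs only $O(\log n)$ extra workspace. For STCON we take $\ell$ ranging up to $n$ and test the count of walks of length $\le n$ in the graph with self-loops added at every vertex. This count is nonzero iff $t$ is reachable, and it reduces to a single $\ell=n$ instance.

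The core is a catalytic matrix-power primitive over $\mathbb{F}_p$. The standard catalytic technique is reversible register programs: a register $\tau$ holds catalytic value $c$, and we transparently compute $\tau \leftarrow \tau + f$, where $f$ is the target value. First, the warm-up. A catalytic Savitch uses $n$ registers $R_v$, each holding an $O(\log n)$-bit field element and so of size $O(\log n)$, one per vertex, storing a vector. The primitive is: given a program that adds $x\cdot(A^{j})_{u,\cdot}$ into the registers, produce one that adds $A^{2j}$ rows. Concretely, add $(A^{2j})_{s,t}=\sum_w (A^j)_{s,w}(A^j)_{w,t}$ via the product trick $\tau_1\tau_2$, using two banks of $n$ registers and the $(\tau_1+a)(\tau_2+b)-\tau_1\tau_2-\dots$ cancellation. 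The recursion has depth $\log \ell$ with constant branching in the clean-computation sense. Each level needs its own register bank of $O(n\log n)$ bits, giving $O(n\log^2 n)$ catalyst and $O(\log n)$ workspace, because the recursion stack is implicit in the catalytic registers via reversibility. The workspace only tracks loop indices per level. Correction: those indices total $O(\log\ell\cdot\log n)$ unless they too are stored catalytically, e.g.\ by having each level consist of a fixed straight-line sequence of $O(1)$ calls to the level below, so only $O(1)$ bits per level are needed plus one global vertex counter.

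For the trade-off, the first step is to split $[n]$ into blocks. The catalyst has only $n/k$ registers per level. So at each level we process the intermediate vertices $w$ in $k$ chunks of size $n/k$, looping over chunks in workspace ($O(\log k)$ bits per level). We handle only the top $\log \ell$ levels this way: each level's chunk index costs $O(\log k)$ workspace, giving $O(\log\ell\cdot\log k)$ total, while each bank holds $n/k$ field elements, so the catalyst is $O(\tfrac nk\log n\cdot\log \ell)$. To accumulate $\sum_w$ across chunks we add each chunk's contribution into the output register, which remains transparent since the contributions are additive and every chunk's temporaries are restored before the next chunk. At $k=n$ this degenerates to plain Savitch-style recursion with $O(\log\ell\log n)$ workspace, and at $k=1$ it recovers the warm-up.

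The main obstacle is making the recursive squaring clean: the inner call must add $(A^j)_{s,w}$ for all $w$ in a chunk into catalytic registers and later remove them. That requires the level-$j$ program to be invertible (run with sign $-1$) without storing the inputs, and the multiplication gadget must not blow up the number of recursive calls to more than a constant per level, since otherwise time rather than space suffers. Space, however, remains the claimed bound regardless. I would first try to implement it with the standard four-call product gadget of Buhrman et al.\ and Cook--Mertz, and verify by induction on the level that all registers other than the target return to their initial values.
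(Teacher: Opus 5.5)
There is a genuine gap, and it sits in the recursion step. The mechanism you actually specify is one of two things. Either you obtain $(A^{2j})_{s,t}=\sum_w (A^j)_{s,w}(A^j)_{w,t}$ through the $(\tau_1+a)(\tau_2+b)$ product gadget, or you call your primitive ``add $x\cdot(A^j)_{u,\cdot}$'' once per intermediate $w$, with $x$ read from a register. Either way, each invocation yields only one scalar, or one row contribution.

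The next level up needs a whole bank filled, for example all of row $s$ of $A^j$. So you must loop over $\Theta(n)$ output coordinates or intermediate vertices, with recursive calls inside the loop. That puts $\Theta(\log n)$ bits of control state on the stack at every level: $\log k$ for the chunk plus $\log(n/k)$ inside the chunk, if you chunk. The total is $O(\log\ell\cdot\log n)$ workspace for every $k$, i.e.\ Savitch's bound, not $O(\log n)$ at $k=1$.

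Your ``Correction'' notices this, but the remedy you name, a fixed straight-line sequence of $O(1)$ calls per level, is not something the product gadget delivers. Loop indices also cannot simply be ``stored catalytically'', and the recursion stack is not implicit in the catalyst; it lives in workspace. Your closing claim that the branching per level affects only time is incorrect. A per-level branching factor $b$ costs $\log b$ bits of control state per level, and that is exactly where the $\log\ell\cdot\log k$ term of the theorem comes from.

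The missing idea is that no multiplication is needed, because $x\mapsto x^{\top}A^{j}$ is linear in the register contents. Make each level a block-to-block program $P^{(\ell)}(U,V,W_1,\dots)$ that adds to all of $V$ an affine function of $U$ whose linear part is $U^{\top}A^{\ell}$. Compose by pushing through an intermediate block:
\begin{itemize}
\item first $P^{(\lceil\ell/2\rceil)}(U\to W_r)$,
\item then $P^{(\lfloor\ell/2\rfloor)}(W_r\to V)$,
\item then $(P^{(\lceil\ell/2\rceil)})^{-1}(U\to W_r)$.
\end{itemize}
That is three calls, so $O(1)$ control bits per level. The offset contributed by the unknown initial contents of the $W$ blocks does not depend on $U$. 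It therefore cancels when you run everything twice, once with $U[s]$ and once with $U[s]+1$, and subtract the two values of $V[t]$. Alternatively, a fourth call that pre-subtracts $W_r\to V$ makes each level clean.

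For the trade-off, index the programs by source and target residue class, $P^{(\ell)}_{i,j}$, so that input and output blocks both have only $\lceil n/k\rceil$ registers. Then recurse as $\sum_{m}P^{(\lfloor\ell/2\rfloor)}_{m,j}\circ P^{(\lceil\ell/2\rceil)}_{i,m}$. The loop over the midpoint class $m$ is the only non-constant branching, and it costs $O(\log k)$ bits per level. Your chunking of intermediate vertices is the right intuition for this $m$-loop. It gives $O(\log\ell\cdot\log k+\log n)$ workspace only once each level is such a chunk-to-chunk linear propagation rather than an entrywise product computation. The Chinese-remaindering part of your plan is fine.
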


\noindent Here, $\mathsf{CSPACE}(S,C)$ denotes catalytic computation with $S$ bits of regular workspace and $C$ bits of catalytic memory.

Independently from us, Chmel, Dudeja, Koucký, Mertz and Rajgopal \cite{CDKMR26} came up with essentially the same algorithm which they use as a subroutine for short length $\ell=2^{O(\sqrt{\log n})}$ $\mathrm{STCON}$-instances and with the choice $k=2^{O(\sqrt{\log n})}$, together with a clever vertex-sparsification of the input graph, to put general~$\mathrm{STCON}$ in~$\mathsf{CSPACE}(O(\log n),\; n/2^{\Theta(\sqrt{\log n})})$.

Since any nondeterministic computation using $S(n)$ bits of workspace induces a configuration graph of size $O(2^{S})$, we obtain the following corollary from our main result.

\begin{corollary}
For all $1\le K\le S$,
\[
\mathsf{NSPACE}(S)\subseteq \mathsf{CSPACE}\!\left(O(S\cdot K),\; O\!\left(2^{S-K}\cdot S^2\right)\right).
\]
\end{corollary}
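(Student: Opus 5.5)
The plan is to reduce the corollary directly to \Cref{thm: STCON Space-space tradeoff} through the configuration graph of the nondeterministic machine. Fix a language $L\in\mathsf{NSPACE}(S)$ decided by a nondeterministic machine $M$ with $S=S(n)\ge\log n$ workspace. For an input $x$, let $G_x$ be the directed configuration graph of $M$ on $x$. Its vertices are configurations (state, head positions, work tape contents), and it has an edge $c\to c'$ whenever $c'$ is a legal successor of $c$. Then $N:=|V(G_x)|\le 2^{cS}$ for some constant $c$. We may assume a unique accepting configuration $t$, for instance by having $M$ clear its tape before accepting. With $s$ the initial configuration, $x\in L$ if and only if $t$ is reachable from $s$ in $G_x$. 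The algorithm never writes $G_x$ down. Vertices are $O(S)$-bit strings, and adjacency between two configurations can be checked in $O(S)$ workspace by reading the relevant input symbol and applying $M$'s transition relation. The algorithm of \Cref{thm: STCON Space-space tradeoff} only queries the input graph through such adjacency queries, so it can be run on $G_x$ with each query answered by this subroutine. This adds only $O(S)$ workspace, which is absorbed.

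Next comes the choice of parameter. Apply the theorem with $n:=N$ and $k:=2^{K'}$, where $K'$ is chosen so that $N/k\le 2^{S-K}$. Reachability can be decided either by calling the STCON bound directly or by summing walk counts of length $\le N$; the former is simpler. The workspace is then $O(\log N\cdot\log k+\log N)=O(S\cdot K'+S)$. The catalyst is $O(\frac{N}{k}\log^2N)=O(\frac{2^{cS}}{2^{K'}}S^2)$.

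The main obstacle is the constant $c$ in $N\le 2^{cS}$. Taking $K'=cS-S+K$ gives catalyst $O(2^{S-K}S^2)$, but then the workspace is $O(S\cdot((c-1)S+K))$, which is too large when $c>1$. To handle this, I first normalize $M$ so that its configurations are encoded in exactly $S+O(\log S)$ bits. Concretely, I reinterpret $S$ as the bit-length of a configuration, which the paper does when it says the configuration graph has size $O(2^S)$. One way to justify this is to count the work tape in bits and absorb the state, head position and input-head position into $O(\log n+\log S)$ extra bits. Since $S\ge\log n$, any $O(\log n)$ overhead is only a constant factor in the exponent, so a clean $N=2^{S}$ additionally requires treating the space bound up to constant factors in the usual $O(\cdot)$-sense of the class.

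With $N=O(2^{S})$, choose $k=2^K$ with $1\le K\le S$, so $k\le N$ as the theorem requires. The workspace is then $O(S\cdot K+S)=O(S\cdot K)$ because $K\ge1$, and the catalyst is $O(2^{S-K}S^2)$. This gives the stated containment.
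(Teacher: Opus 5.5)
Your route is the paper's. The paper's whole argument is the single sentence that an $S$-bit nondeterministic computation has a configuration graph of size $O(2^S)$, followed by \cref{thm: STCON Space-space tradeoff} with $k=2^K$. Your final paragraph, together with your implicit-adjacency remark (which the paper leaves unsaid), carries that out correctly.

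The problem is the normalization paragraph, where you try to \emph{derive} $N=O(2^S)$. You rightly note that any constant $c>1$ in $N\le 2^{cS}$ is fatal, but each normalization you propose reintroduces such a loss.
\begin{itemize}
\item Configurations of $S+O(\log S)$ bits give $N=2^S\cdot\mathrm{poly}(S)$. With $k=2^K$ this multiplies the catalyst by $\mathrm{poly}(S)$. Absorbing it into $k$ instead gives $\log k=K+O(\log S)$ and workspace $O(S(K+\log S))$, which is not $O(SK)$ for small $K$.
\item Paying $O(\log n)$ extra bits for the input head is worse. For $S=\log n$ there are already $n\cdot 2^S=2^{2S}$ (input-head, work-tape) pairs, so the catalyst becomes roughly $2^{2S-K}S^2$.
\item ``Treating the space bound up to constant factors'' cannot absorb this, because $2^{cS-K}\neq O(2^{S-K})$ for $c>1$. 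If $\mathsf{NSPACE}(S)$ meant $\mathsf{NSPACE}(O(S))$, then taking $S=\frac1c\log n$ and $K=1$ would put $\mathrm{STCON}$ in $\mathsf{CSPACE}(O(\log n),O(n^{1/c}\log^2 n))$ for every constant $c$. That is far beyond anything \cref{thm: STCON Space-space tradeoff} yields.
\end{itemize}

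So $N=O(2^S)$ has to be the \emph{convention} under which the corollary is read, not a consequence of normalizing $M$. Concretely, $S$ is the number of bits of a complete configuration of $M$ on length-$n$ inputs, up to an additive $O(1)$, so that $N\le 2^{S+O(1)}$. This count covers the state, input-head position, work-head position and work-tape contents. It is what the paper's phrase ``configuration graph of size $O(2^S)$'' tacitly assumes. With it stated as the hypothesis, your last paragraph is a complete proof identical to the paper's.

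Under the standard convention ($S$ binary work cells plus a separate input head) one only has $N=O(n\,S\,2^S)$. The theorem then gives the stated bounds only up to an extra factor $nS$ in the catalyst. Alternatively, it costs an extra $\log n+\log S$ in $\log k$, hence an extra $O(S(\log n+\log S))$ in the workspace.
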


\subsection{Catalytic Space and Register Programs}

\noindent\textbf{Catalytic Turing machines.}
A \emph{catalytic Turing machine} is a multi-tape Turing machine with a read-only input tape, a write-only output tape, a (regular) read-and-write work tape, and an additional read-and-write \emph{catalytic} tape.
The catalytic tape is initialized with an auxiliary bitstring that may be modified during the computation, under the promise that the machine must restore the catalytic tape to its initial contents upon halting.

We say that a catalytic Turing machine $M$ runs in $\mathsf{CSPACE}(S(n),C(n))$ if, for every input $x\in\{0,1\}^n$, the computation of $M$ on $x$ visits at most $S(n)$ distinct cells on the work tape and at most $C(n)$ distinct cells on the catalytic tape.
For a language $L\subseteq\{0,1\}^*$, we write $L\in \mathsf{CSPACE}(S(n),C(n))$ if there exists a catalytic Turing machine deciding $L$ within these bounds.
We define $\mathsf{CL}$ to be the class of languages decidable in $\mathsf{CSPACE}(O(\log n),\mathrm{poly}(n))$.

The catalytic space model has proven to be very useful in enhancing our understanding of space-bounded computation. Most notably, using catalytic techniques, Cook and Mertz \cite{CM24} showed that the tree evaluation problem can be solved in near-logarithmic $O(\log n \cdot \log \log n)$ space, a problem which was originally introduced by Cook, McKenzie, Wehr, Braverman and Santhanam~\cite{CMMBS12} as a candidate to potentially separate $\mathsf{L}$ and $\mathsf{P}$.
Cook and Mertz's finding was recently used by Williams \cite{Will25} to prove that any time $t$ computation can be simulated in space $O(\sqrt{t\log t})$.

\medskip
\noindent\textbf{Catalytically Simulating Register Programs.}
Following much of the literature on catalytic computation, it is convenient to view the catalytic tape as a collection of registers $R_1,\dots,R_m$.
We interpret each register as storing a value in $\mathbb{Z}_q$ (in binary), and we will implement reversible arithmetic modulo $q$ for a suitable integer $q$.
In particular, we will need reversible updates of the form
\[
    R_i \leftarrow R_i \pm R_j \pmod{q}
    \qquad\text{and}\qquad
    R_i \leftarrow R_i \pm c \pmod{q},
\]
where $c\in \mathbb{Z}_q$ is a constant.

If $q$ is a power of $2$, these updates are straightforward to implement with $O(\log n)$ bits of classical control:
we allocate exactly $\log q$ catalytic bits per register, so that the register contents always represent an integer in $\{0,1,\dots,q-1\}$, and the total catalytic space is $m\log q$.

In our applications, however, we will want to choose moduli $q$ that are \emph{not} powers of $2$.
This creates an immediate encoding issue: if we simply allocate $\lceil \log q\rceil$ bits per register, then an arbitrary initial catalyst may encode integers in $\{0,1,\dots,2^{\lceil \log q\rceil}-1\}$, so many bitstrings might not correspond to valid residues modulo~$q$.

There are several standard ways to address this.
Buhrman, Cleve, Kouck\'y, Loff, and Speelman~\cite{BCKLS14} give a method that incurs a quadratic overhead in catalytic memory (which they already note is avoidable).
Cook and Pyne~\cite{CP25} provide a construction with only logarithmic-factor overhead: they enlarge each register slightly and apply a uniformly random shift so that, with some probability, the resulting contents encode a valid residue of some larger number.
Since our results do not require time efficiency, this randomness could also be removed by deterministically searching over all possible shifts.

In Appendix~\ref{appendix:A} we give a simple alternative solution that uses only $\lceil m\log q\rceil$ bits on the catalytic tape, which is the information-theoretic optimum, and might be of independent interest.
For the rest of the paper, we assume that every initial bitstring of a register always represents a valid integer.

\subsection{Our Techniques}

\noindent\textbf{A catalytic variant of Savitch's algorithm.}
We begin our paper by proving the catalytic endpoint of our trade-off (the case $k=1$ in \cref{thm: STCON Space-space tradeoff}), since it provides some of the main ideas used throughout the paper.
The starting point is a catalytic ``walk-propagation'' procedure, similar to one used by Cook and Pyne~\cite{CP25}, which propagates information along edges in a reversible manner.

Fix a modulus $q$.
For each ``layer'' $i\in[n]$ we maintain a block of registers $R_i[\cdot]$, indexed by vertices, where each register $R_i[v]\in \mathbb{Z}_q$ is stored in binary using $O(\log q)=O(\log n)$ catalytic bits.
Intuitively, the $i$th layer aggregates information about walks of length~$i-1$ ending at each vertex.
Given these blocks, we can propagate information forward by applying, for every directed edge $(u,v)\in E$, the reversible update
\begin{equation}\label{eqn: edge-push}
    R_i[v] \leftarrow R_i[v] + R_{i-1}[u] \pmod{q}.
\end{equation}
Performing these updates in a fixed increasing order for each $i\in\{2,...,n\}$ yields an easy to control reversible transformation of the catalyst.
After sweeping over all edges and layers, the register $R_n[t]$ contains aggregate information about the number of walks ending in $t$ of length $n-1$.
This information can be extracted by running the procedure multiple times with different initializations of $R_1[\cdot]$, as in~\cite{CP25}.

Our first new observation is that Savitch's divide-and-conquer idea can be used to \emph{shrink the catalyst size} required by such propagation.
Rather than maintaining all~$n$ layers simultaneously, it suffices to keep only~$\lceil\log n\rceil$ blocks of registers~$R_1[\cdot],\dots,R_{\lceil \log n\rceil}[\cdot]$ and propagate information in a recursive manner.
Concretely, instead of propagating length-$1$ information from layer $i-1$ to layer $i$ as in~\eqref{eqn: edge-push}, we design a procedure that propagates information corresponding to length-$\ell$ walks from one block of registers~$U[\cdot]$ to another block~$V[\cdot]$ by recursion on~$\ell$:
we first propagate length-$\ell/2$ information from $U$ into an intermediate block~$W$, and then propagate length-$\ell/2$ information from~$W$ into~$V$.
This yields a Savitch-style recursion depth of~$\lceil\log \ell\rceil$ while preserving reversibility and using only $O(\log n)$ regular workspace to control.

\medskip
\noindent\textbf{Our Space-space Trade-off.}
Our space-space trade-off builds on the catalytic Savitch-type propagation described above together with a key idea from~\cite{BBRS98}.

At a high level, Savitch's classical algorithm can be viewed as iterating space-efficiently over candidate $st$-walks $(v_1,\dots,v_n)\in V^n$ of length $n-1$ using $O(\log^2 n)$ space.
In~\cite{BBRS98}, the authors observed that one can replace this iteration over vertex sequences by an iteration over \emph{partition sequences}.
One of their subroutines partitions the vertex set into~$k$ groups and enumerates all length-$n$ sequences of parts~$(r_1,\dots,r_n)\in [k]^n$.
For a fixed partition sequence, one then verifies whether there exists an~$st$-walk whose~$i$th vertex lies in the part indexed by $r_i$.
The enumeration over partition sequences can again be carried out in a Savitch-style manner using $O(\log n\cdot \log k)$ space, while their verification procedure for a fixed sequence uses $O(n/k)$ additional (clean) workspace.

We show that this verification step can be performed using catalytic space instead of clean workspace with our catalytic Savitch variant, incurring only a polylogarithmic overhead.
Concretely, we implement the~$O(n/k)$-space verification using~$O\!\left(\frac{n}{k}\cdot \log^2 n\right)$ bits of catalytic memory (and $O(\log n)$ regular workspace for control),
while keeping the classical $O(\log n\cdot \log k)$-space Savitch-style enumeration of partition sequences unchanged.
Together, these ingredients yield the trade-off stated in~\cref{thm: STCON Space-space tradeoff}.

\medskip
\noindent\textbf{Organization.}
The organization of this paper is as follows:
In section \ref{sec: Catalytic Savitch} we will present and prove the correctness and complexity of our catalytic variant of Savitch's algorithm,
and in section \ref{sec: space-space trade-off} we will prove our main result, the space-space trade-off of \cref{thm: STCON Space-space tradeoff}.

\section{A Catalytic Variant of Savitch's Algorithm}\label{sec: Catalytic Savitch}

In this section we will prove the $k=1$ case of \cref{thm: STCON Space-space tradeoff} which we view as a catalytic-logspace variant of Savitch's algorithm.

Fix a prime modulus $q$ with $\log q = O(\log n)$.
Our catalytic algorithm uses $\lceil \log n\rceil+2$ blocks of $n$ registers each on the catalytic tape, denoted
\[
    U,\ V,\ W_1,\dots,W_{\lceil \log n\rceil}.
\]
For a block $R\in\{U,V,W_1,\dots,W_{\lceil \log n\rceil}\}$ and a vertex $v\in[n]$, we write $R[v]$ for the $v$th register in~$R$.
Each register stores an element of $\mathbb{Z}_q$ in binary, so the total catalyst size is $O(n\cdot \log^2 n)$.

\medskip
The main primitive of our algorithm is a family of reversible register programs $P^{(\ell)}(U,V,W_1,\dots,W_{\lceil \log \ell\rceil})$, which, informally, ``propagate'' information from block $U$ to block $V$ corresponding to walks of length~$\ell$.
We will make this interpretation precise later, for now, we focus on the definition and the key structural properties.

\medskip
\noindent\textbf{Base program.}
For $\ell=1$, define $P^{(1)}(U,V)$ by\\
\begin{algorithm}[H]
\For{$(u,v)\in E$}{
    $V[v] \gets V[v] + U[u] \pmod{q}$
}
\end{algorithm}

\noindent\textbf{Recursive program.} For $\ell\geq 2$, we set $r=r(\ell)=\lceil\log\ell\rceil$ and define $P^{(\ell)}(U,V,W_1,\dots,W_{r})$:\\
\begin{algorithm}[H]
    $P^{(\lceil\ell/2\rceil)}(U,W_{r},W_1,\dots,W_{r-1})$\qquad\tcp{Propagate walks of length $\lceil\ell/2\rceil$ from $U$ to $W_{r}$}
    $P^{(\lfloor\ell/2\rfloor)}(W_{r},V,W_1,\dots,W_{r-1})$\qquad\tcp{Propagate walks of length $\lfloor\ell/2\rfloor$ from $W_{r}$ to $V$}
    $\left(P^{(\lceil\ell/2\rceil)}\right)^{-1}(U,W_{r},W_1,\dots,W_{r-1})$\qquad\tcp{Uncompute register $W_{r}$}
\end{algorithm}

Both programs are reversible.
For $P^{(1)}$, the inverse is obtained by replacing ``$+$'' with ``$-$'' in the update rule.
For $\ell\ge 2$, the inverse of $P^{(\ell)}$ is obtained by replacing the middle call
$P^{(\lfloor \ell/2\rfloor)}$ with its inverse~$\bigl(P^{(\lfloor \ell/2\rfloor)}\bigr)^{-1}$, since the first and third lines already form a compute--uncompute pair.

\begin{claim}\label{clm:only-V-changes}
For every $\ell\ge 1$, executing $P^{(\ell)}(U,V,W_1,\dots,W_r)$ leaves all registers unchanged except those in block $V$.
\end{claim}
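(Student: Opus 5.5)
We argue by strong induction on $\ell$, proving simultaneously that $P^{(\ell)}$ and its inverse $\bigl(P^{(\ell)}\bigr)^{-1}$ only change registers in block $V$. We strengthen the statement slightly so the induction goes through. The strengthened claim is that the net effect of $P^{(\ell)}$ (and of its inverse) on the catalyst is a map that alters only $V$. Moreover, the new content of $V$ equals the old content of $V$ plus (respectively minus) a quantity that depends only on the contents of $U$ and on the graph, and not on $V$ itself or on any $W_j$.

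\textbf{Base case $\ell=1$.} Each instruction of $P^{(1)}$ writes to some $V[v]$ and reads only $U[u]$. Since $U$ and $V$ are distinct blocks, $U$ is never modified during the loop. Hence the final $V$ is the old $V$ plus a vector $A^{(1)}(U)$, where $A^{(1)}(U)[v]=\sum_{(u,v)\in E}U[u]$, and nothing else changes. The inverse subtracts the same vector.

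\textbf{Inductive step $\ell\ge 2$.} Write $a=\lceil \ell/2\rceil$ and $b=\lfloor\ell/2\rfloor$, and let $r=\lceil\log \ell\rceil$. Both calls use work blocks $W_1,\dots,W_{r-1}$, and we need $\lceil\log a\rceil\le r-1$, which holds. Fix the initial contents $U=x$, $V=y$, $W_r=w$.
\begin{enumerate}
\item By induction, line~1 changes only $W_r$, which becomes $w+A^{(a)}(x)$.
\item Line~2 changes only $V$, which becomes $y+A^{(b)}\bigl(w+A^{(a)}(x)\bigr)$. Since its "$U$" argument is $W_r$ and its output block is $V$, the blocks $U$ and $W_r$ are untouched.
\item Line~3 is the inverse call on $(U,W_r)$. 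By induction it changes only $W_r$, subtracting $A^{(a)}$ evaluated on $U$. Because $U$ still holds $x$, $W_r$ returns exactly to $w$.
\end{enumerate}
So only $V$ changed. For the strengthened form, however, the increment $A^{(b)}(w+A^{(a)}(x))$ depends on $w$ unless $A^{(b)}$ is additive. So we also carry linearity in the induction: each $A^{(\ell)}$ is $\mathbb{Z}_q$-linear. This holds at the base, and compositions of linear maps are linear. Then the increment is $A^{(b)}(w)+A^{(b)}A^{(a)}(x)$, which still depends on $w$. Hence we do not need increment independence for Claim~\ref{clm:only-V-changes}. It suffices to claim only that the other registers are restored and that the increment is determined by the input. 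The inverse is handled the same way, with the middle call inverted.

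\textbf{Main obstacle.} The subtle point is step 3. Line 3 restores $W_r$ only if its input $U$ is unchanged between lines 1 and 3, and only if each inverse call undoes the forward call with identical inputs. This follows because the middle call does not write to $U$ or $W_r$, and because inverse programs are exact inverses by construction: each is a reversed sequence of reversed instructions. We also check the aliasing condition that the blocks $U$, $V$, $W_1,\dots,W_r$ passed to each recursive call are pairwise distinct. This holds since $W_r$ is excluded from the inner work blocks.
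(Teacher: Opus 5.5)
Your argument is in substance the paper's: induction on $\ell$, with line~1 changing only $W_r$, line~2 changing only $V$, and line~3 undoing line~1. What needs fixing is the strengthened hypothesis you announce at the outset. It is false, as you notice yourself: the increment to $V$ depends on the work blocks. Yet step~3 still relies on it when you say the inverse call subtracts ``$A^{(a)}$ evaluated on $U$'' and conclude from $U$ alone being unchanged. For $a\ge 2$, the first call's increment to $W_r$ also depends on the contents of $W_1,\dots,W_{r-1}$.

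Drop the strengthening entirely; the simultaneous induction over inverses is also unnecessary for this claim. Line~3 restores $W_r$ because $\bigl(P^{(a)}\bigr)^{-1}$ is the literal inverse of line~1. By the induction hypothesis line~2 changes only $V$, and line~1 never touches $V$. Hence every block that line~1 reads or writes ($U$, $W_r$, $W_1,\dots,W_{r-1}$) holds at the start of line~3 exactly what it held at the end of line~1. This is precisely the paper's reasoning, and your ``main obstacle'' paragraph essentially contains it, once you add that $W_1,\dots,W_{r-1}$ are restored by line~2.
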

\begin{proof}
We argue by induction on $\ell$.
For $\ell=1$, the program $P^{(1)}(U,V)$ updates only registers in $V$ by definition.

Assume the claim holds for all lengths smaller than $\ell\geq 2$ and consider $P^{(\ell)}(U,V,W_1,\dots,W_r)$.
By the induction hypothesis, the first call
$P^{(\lceil \ell/2\rceil)}(U,W_r,W_1,\dots,W_{r-1})$ can modify only registers in $W_r$.
The second call
$P^{(\lfloor \ell/2\rfloor)}(W_r,V,W_1,\dots,W_{r-1})$ can modify only registers in $V$.
Finally, the third line applies the inverse of the first call, restoring $W_r$ to its previous contents.
Therefore, overall, only registers in $V$ may differ from their initial values.
\end{proof}

For $\ell\in\mathbb{N}^*$, $v\in[n]$, and $\tau=(\tau_1,\dots,\tau_n)\in\mathbb{Z}_q^n$, let $\alpha_{\ell,v}(\tau)\in\mathbb{Z}_q$ denote the value added to the register~$V[v]$ by executing~$P^{(\ell)}(U,V,W_1,\dots,W_r)$ when the initial contents of~$U$ are given by~$U[i]=\tau_i$ for all~$i\in[n]$ (and all other blocks are initialized by some fixed content).
We also write~$N_\ell(i,j)$ for the number of walks from~$i$ to~$j$ of length exactly~$\ell$, where by a walk we mean a sequence of neighboring vertices potentially containing loops.

\begin{claim}\label{claim: alpha-diff is number of walks}
For every $\ell\in\mathbb{N}^*$, $\tau,b\in\mathbb{Z}_q^n$, and $v\in[n]$,
\[
    \alpha_{\ell,v}(\tau+b)-\alpha_{\ell,v}(\tau)
    \equiv
    \sum_{i\in[n]} b_i\,N_{\ell}(i,v)
    \pmod{q}.
\]
In particular, if $b_s=1$ and $b_i=0$ for all $i\neq s$, then
\[
    \alpha_{\ell,t}(\tau+b)-\alpha_{\ell,t}(\tau)\equiv N_{\ell}(s,t)\pmod{q}.
\]
\end{claim}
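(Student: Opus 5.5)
Our plan is to prove the slightly stronger statement that, for every fixed initial content of the blocks $V,W_1,\dots,W_r$, the map $\tau\mapsto\alpha_{\ell,v}(\tau)$ is \emph{affine} in $\tau$ with linear part given by walk counts. Concretely, we will show by induction on $\ell$ that there is a constant $c_{\ell,v}\in\mathbb{Z}_q$, depending only on the fixed contents of the other blocks, such that
\[
    \alpha_{\ell,v}(\tau)\equiv c_{\ell,v}+\sum_{i\in[n]}\tau_i\,N_\ell(i,v)\pmod q .
\]
The claim then follows by subtracting the expressions for $\tau+b$ and $\tau$. The second statement is the specialization to $b=e_s$. For the induction we track the value added to every register of the target block simultaneously, and we also need the analogous statement for the inverse program: $(P^{(\ell)})^{-1}$ subtracts exactly the amount that $P^{(\ell)}$ would add. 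This holds because the amounts added are functions of the source and workspace blocks only, and by \cref{clm:only-V-changes} those blocks are not modified.

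\textbf{Base case.} For $\ell=1$, the program $P^{(1)}$ adds $\sum_{(u,v)\in E}U[u]=\sum_u \tau_u N_1(u,v)$ to $V[v]$, with $c_{1,v}=0$. The updates read only $U$, which is never written, so the order of the loop is irrelevant. If $E$ is a set, $N_1(u,v)$ is the adjacency indicator; a multigraph works the same way.

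\textbf{Inductive step.} Let $\ell\ge2$, $a=\lceil\ell/2\rceil$ and $b'=\lfloor\ell/2\rfloor$, and let $\omega\in\mathbb{Z}_q^n$ be the initial content of $W_r$. We follow the three lines of the recursive program in turn:
\begin{enumerate}
    \item By the induction hypothesis, line 1 sets $W_r[w]=\omega_w+c_{a,w}+\sum_i\tau_iN_a(i,w)$. By \cref{clm:only-V-changes}, blocks $U,V,W_1,\dots,W_{r-1}$ are unchanged, so their contents when line 2 starts are the fixed initial ones.
    \item Again by the induction hypothesis, line 2 adds to $V[v]$ the quantity $c'_{b',v}+\sum_w W_r[w]\,N_{b'}(w,v)$. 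Here $c'_{b',v}$ depends only on the contents of $V,W_1,\dots,W_{r-1}$, which are fixed and independent of $\tau$.
    \item Line 3 restores $W_r$ to $\omega$ and, by \cref{clm:only-V-changes}, does not touch $V$.
\end{enumerate}
Expanding the amount added in line 2 gives a constant, namely $c'_{b',v}+\sum_w(\omega_w+c_{a,w})N_{b'}(w,v)$, plus the linear part
\[
    \sum_i\tau_i\sum_w N_a(i,w)N_{b'}(w,v).
\]
The inner sum equals $N_{a+b'}(i,v)=N_\ell(i,v)$ by the standard concatenation identity for walks: split each length-$\ell$ walk at its $a$-th step. The identity holds over $\mathbb{Z}$, hence modulo $q$.

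\textbf{Main obstacle.} The one step needing care is justifying that the offsets $c_{a,w}$ and $c'_{b',v}$ really do not depend on $\tau$. The induction hypothesis is applied to sub-calls whose target and workspace blocks hold contents produced earlier in the run. It must therefore be stated for \emph{arbitrary} fixed contents of the non-source blocks, not just the ones from the definition, and we must check via \cref{clm:only-V-changes} that those contents do not depend on $\tau$ at the moment each sub-call starts. For line 2 this holds because $W_1,\dots,W_{r-1}$ and $V$ are untouched by line 1. The source block $W_r$ of line 2 does depend on $\tau$, but it enters only through the linear part, which is exactly what the hypothesis allows.
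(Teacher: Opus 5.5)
Your proof is correct and takes essentially the paper's route: strong induction on $\ell$ through the three lines of $P^{(\ell)}$, using \cref{clm:only-V-changes} to see that every block other than the source holds $\tau$-independent contents when line~2 starts, and finishing with the walk-concatenation identity. Your affine formulation $\alpha_{\ell,v}(\tau)\equiv c_{\ell,v}+\sum_i\tau_iN_\ell(i,v)$ is equivalent to the paper's difference form. Your remark that the hypothesis must hold for arbitrary fixed contents of the other blocks is exactly what the paper uses implicitly when it applies the hypothesis to the call with input $\sigma+\alpha_{\lceil\ell/2\rceil,\cdot}(\tau)$.
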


\begin{proof}
We proceed by induction on $\ell$.

\smallskip
\noindent\emph{Base case $\ell=1$.}
For any $v\in[n]$ and $\tau\in\mathbb{Z}_q^n$, the program $P^{(1)}(U,V)$ adds to $V[v]$ the value
\[
    \alpha_{1,v}(\tau)\equiv \sum_{(i,v)\in E}\tau_i \pmod{q}.
\]
Hence, for any $\tau,b\in\mathbb{Z}_q^n$,
\[
    \alpha_{1,v}(\tau+b)-\alpha_{1,v}(\tau)
    \equiv
    \sum_{(i,v)\in E} b_i
    \pmod{q}.
\]
This equals $\sum_{i\in[n]} b_i\,N_1(i,v)\pmod{q}$, since $N_1(i,v)=1$ iff $(i,v)\in E$.

\smallskip
\noindent\emph{Inductive step.}
Assume the claim holds for all lengths smaller than $\ell\ge 2$.
Let $r=\lceil\log \ell\rceil$.
By Claim~\ref{clm:only-V-changes}, the first line of $P^{(\ell)}$ updates only the block $W_r$, and in particular transforms its contents from some initial vector~$\sigma\in\mathbb{Z}_q^n$ to
\[
    \sigma + \alpha_{\lceil \ell/2\rceil,\cdot}(\tau)\in\mathbb{Z}_q^n,
\]
where $\alpha_{\lceil \ell/2\rceil,\cdot}(\tau)$ denotes the vector whose $i$th entry is $\alpha_{\lceil \ell/2\rceil,i}(\tau)$.
The second line then applies $P^{(\lfloor \ell/2\rfloor)}(W_r,V,\dots)$, so the net value added to $V[v]$ is
\[
    \alpha_{\ell,v}(\tau)
    =
    \alpha_{\lfloor \ell/2\rfloor,v}\bigl(\sigma+\alpha_{\lceil \ell/2\rceil,\cdot}(\tau)\bigr),
\]
since the third line uncomputes $W_r$ without touching~$V$.

Therefore,
\begin{align*}
\alpha_{\ell,v}(\tau+b)-\alpha_{\ell,v}(\tau)
    &\equiv
    \alpha_{\lfloor \ell/2\rfloor,v}\bigl(\sigma+\alpha_{\lceil \ell/2\rceil,\cdot}(\tau+b)\bigr)
    -
    \alpha_{\lfloor \ell/2\rfloor,v}\bigl(\sigma+\alpha_{\lceil \ell/2\rceil,\cdot}(\tau)\bigr) \pmod{q}.
\end{align*}
Applying the induction hypothesis to the length-$\lfloor \ell/2\rfloor$ program (with input difference
$\alpha_{\lceil \ell/2\rceil,\cdot}(\tau+b)-\alpha_{\lceil \ell/2\rceil,\cdot}(\tau)$) gives
\begin{align*}
\alpha_{\ell,v}(\tau+b)-\alpha_{\ell,v}(\tau)
    &\equiv
    \sum_{i\in[n]}\bigl(\alpha_{\lceil \ell/2\rceil,i}(\tau+b)-\alpha_{\lceil \ell/2\rceil,i}(\tau)\bigr)\,N_{\lfloor \ell/2\rfloor}(i,v)
    \pmod{q}.
\end{align*}
Applying the induction hypothesis again, now to each term
$\alpha_{\lceil \ell/2\rceil,i}(\tau+b)-\alpha_{\lceil \ell/2\rceil,i}(\tau)$, yields
\begin{align*}
\alpha_{\ell,v}(\tau+b)-\alpha_{\ell,v}(\tau)
    &\equiv
    \sum_{i\in[n]}\Bigl(\sum_{j\in[n]} b_j\,N_{\lceil \ell/2\rceil}(j,i)\Bigr)\,N_{\lfloor \ell/2\rfloor}(i,v)
    \pmod{q}\\
    &\equiv
    \sum_{j\in[n]} b_j \Bigl(\sum_{i\in[n]} N_{\lceil \ell/2\rceil}(j,i)\,N_{\lfloor \ell/2\rfloor}(i,v)\Bigr)
    \pmod{q}.
\end{align*}
The inner sum counts length-$\ell$ walks from $j$ to $v$, by walk-concatenation $\sum_{i\in[n]} N_{\ell_1}(j,i)\,N_{\ell_2}(i,v) \;=\; N_{\ell_1+\ell_2}(j,v)$, and since $\lceil \ell/2\rceil+\lfloor \ell/2\rfloor=\ell$.
We obtain:
\[
    \alpha_{\ell,v}(\tau+b)-\alpha_{\ell,v}(\tau) \equiv \sum_{j\in[n]} b_j\,N_\ell(j,v) \pmod{q},
\]
which completes the induction.
\end{proof}

Claim~\ref{claim: alpha-diff is number of walks} immediately yields a simple catalytic procedure for counting length-$\ell$ $st$-walks modulo~$q$.
The idea is to run the propagation program twice: once with the initial contents of $U$ unchanged, and once after incrementing $U[s]$ by~$1$.
The difference in the resulting values of $V[t]$ isolates $N_\ell(s,t)\bmod q$.
We formalize this as follows.

\medskip
\noindent\textbf{Algorithm 1: Catalytic Savitch.}\\
\begin{algorithm}[H]
\For{$c\in\{0,1\}$}{
    $U[s] \gets U[s] + c \pmod{q}$\;
    $P^{(\ell)}(U,V,W_1,\dots,W_r)$\;
    $\alpha_{c} \gets V[t]$\;
    $\bigl(P^{(\ell)}\bigr)^{-1}(U,V,W_1,\dots,W_r)$\;
    $U[s] \gets U[s] - c \pmod{q}$\;
}
\Return{$\alpha_1-\alpha_0 \pmod{q}$}
\end{algorithm}

\begin{claim}\label{clm:catalytic-savitch-alg}
For all $\ell\in[n]$, Algorithm~1 runs in $\mathsf{CSPACE}(O(\log n),\,O(n\log^2 n))$ and outputs $N_{\ell}(s,t)\pmod{q}$.
\end{claim}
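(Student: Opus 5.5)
We prove correctness, catalyst restoration, and the space bounds in turn.

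\emph{Correctness.} Fix an iteration $c\in\{0,1\}$ and let $\tau\in\mathbb{Z}_q^n$ be the initial contents of $U$. By Claim~\ref{clm:only-V-changes}, the program $P^{(\ell)}$ modifies only block $V$. In particular, after each full iteration every block, including $U$, $V$ and all $W_i$, is back to its initial contents: $P^{(\ell)}$ is followed by its inverse, and the increment of $U[s]$ is undone at the end. Hence both iterations start from the same initial contents of $V$ and of $W_1,\dots,W_r$, say $V[t]=\beta$. The fixed auxiliary content in the definition of $\alpha$ is therefore the same in both runs.

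After line~3 we have $\alpha_0=\beta+\alpha_{\ell,t}(\tau)$ and $\alpha_1=\beta+\alpha_{\ell,t}(\tau+e_s)$, where $e_s$ is the indicator vector of $s$. The $\beta$ cancels in the difference, and the ``in particular'' part of Claim~\ref{claim: alpha-diff is number of walks} gives $\alpha_1-\alpha_0\equiv N_\ell(s,t)\pmod q$.

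One subtlety needs checking here. The definition of $\alpha_{\ell,v}$ fixes the contents of all blocks other than $U$, and the inductive proof of Claim~\ref{claim: alpha-diff is number of walks} uses an arbitrary $\sigma$. Both iterations share the same $W$ and $V$ contents, so the claim applies directly.

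\emph{Catalyst restoration.} This follows from the same observation: each iteration composes a reversible program with its exact inverse, and $U[s]$ is shifted by $+c$ and then by $-c$. So the catalytic tape is restored at the end of each iteration, and hence at halting.

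\emph{Space.} The catalyst consists of $r+2\le\lceil\log n\rceil+2$ blocks of $n$ registers, each of $O(\log q)=O(\log n)$ bits, for a total of $O(n\log^2 n)$ bits. The values $\alpha_0,\alpha_1$ and the counter $c$ take $O(\log n)$ clean bits.

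The main obstacle is the control of the recursion in $O(\log n)$ workspace. A naive stack would store, at each of the $r$ levels, the current length, which costs $O(\log^2 n)$. Instead, we store only a bit string of length $O(r)$ recording, at each level, which of the three lines (first, middle, or uncompute) we are in, plus one bit indicating whether the current call is inverted.

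From this string and the global $\ell$, all the relevant data can be recomputed on the fly. The current length follows from $\ell$ by iterating $\lceil\cdot/2\rceil$ or $\lfloor\cdot/2\rfloor$ along the path. The source and target blocks are determined by the depth (the intermediate block is $W_{r(\ell')}$), and the orientation is the parity of inversions along the path. Each of these recomputations uses $O(\log n)$ space, since the string has length $O(\log \ell)=O(\log n)$ and each step of the recomputation only needs a current length of $O(\log n)$ bits. At the leaves, we iterate over the edges with an $O(\log n)$-bit counter and perform the register update $\pm$ with $O(\log n)$ scratch bits. Arithmetic modulo the prime $q$ is handled as discussed in the appendix. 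Together this gives $\mathsf{CSPACE}(O(\log n),O(n\log^2 n))$.
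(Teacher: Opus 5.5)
Your proposal is correct and follows the paper's argument: the same two-run difference justified by Claim~\ref{claim: alpha-diff is number of walks}, restoration via compute--uncompute pairs, and an $O(\log \ell)$-bit stack of per-level stages in $\{1,2,3\}$ to control the recursion. One minor imprecision: the source and target blocks of the current call depend on the whole path of stages, not just on the depth, but they are still recomputable from your stage string in $O(\log n)$ space, so the bound stands.
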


\begin{proof}
We first show correcteness of the algorithm and then analyze its complexity.

\smallskip
\emph{Correctness.}
Let $\tau\in\mathbb{Z}_q^n$ denote the initial contents of $U$, and $\tau_t'\in\mathbb{Z}_q$ the initial content of $V[t]$.
Further, let $b\in\mathbb{Z}_q^n$ be the vector with $b_s=1$ and $b_i=0$ for $i\neq s$.
By definition of $\alpha_{\ell,t}(\cdot)$, the two runs of~$P^{(\ell)}$ produce
\[
\alpha_0 \equiv \tau_t'+\alpha_{\ell,t}(\tau)\pmod{q}
\qquad\text{and}\qquad
\alpha_1 \equiv \tau_t'+\alpha_{\ell,t}(\tau+b)\pmod{q}.
\]
Therefore,
\[
\alpha_1-\alpha_0
\equiv
\alpha_{\ell,t}(\tau+b)-\alpha_{\ell,t}(\tau)
\equiv
N_\ell(s,t)
\pmod{q},
\]
where the last congruence is Claim~\ref{claim: alpha-diff is number of walks}.

\smallskip
\emph{Complexity.}
We already observed, the catalyst consists of $\lceil\log n\rceil+2$ blocks of $n$ registers over $\mathbb{Z}_q$, with~$\log q =O(\log n)$, for a total of $O(n\cdot\log^2 n)$ catalytic bits.
Further, Algorithm~1 restores the catalyst to its initial contents because each call to $P^{(\ell)}$ is followed by the inverse call $\bigl(P^{(\ell)}\bigr)^{-1}$, and we also undo the increment of $U[s]$.

It remains to bound the regular workspace.
Outside the calls to $P^{(\ell)}$ and its inverse, the algorithm stores~$\alpha_0,\alpha_1\in\mathbb{Z}_q$ and computes their difference, which requires $O(\log q)=O(\log n)$ bits.
To execute $P^{(\ell)}$ (and its inverse), we maintain an explicit control stack for the recursion.
At each recursion depth, there are only three stages to remember (corresponding to the three lines in the definition of $P^{(\ell)}$), so the stack can be represented by a sequence
$(s_1,\dots,s_{\lceil\log \ell\rceil})$ with $s_d\in\{1,2,3\}$.
This uses $O(\log \ell)=O(\log n)$ bits.
Given this control information, we can determine which blocks play the roles of the current input/output/intermediate blocks and iterate through the edge list to perform the required register updates.
Hence the total regular workspace is $O(\log n)$.
\end{proof}

We have shown how to compute $N_\ell(s,t)\bmod q$ for any prime modulus $q$ of $O(\log n)$ bits.
Ultimately, however, we want the exact value of $N_\ell(s,t)$, not just its residue modulo a single prime.
Fortunately, obtaining the exact value is no harder: by evaluating $N_\ell(s,t)$ modulo sufficiently many pairwise coprime moduli, we obtain its Chinese remainder representation, which can then be converted to binary in logspace using a result of Chiu, Davida and Litow~\cite{CDL01}.

Recall that if $m_1,\dots,m_k$ are pairwise coprime and $M:=\prod_{i=1}^k m_i$, then the Chinese remainder theorem implies that every integer $x\in\{0,1,\dots,M-1\}$ is uniquely determined by the residue vector
\[
(x\bmod m_1,\dots,x\bmod m_k).
\]
This residue vector is referred to as the \emph{Chinese remainder representation} of~$x$ (with respect to $m_1,\dots,m_k$).

We will use the following finding of~\cite{CDL01}.

\begin{lemma}[{\cite[Theorem~3.3]{CDL01}}]\label{lem: CRR to binary}
Conversion from Chinese remainder representation to binary is in~$\mathsf{NC}^1$.
\end{lemma}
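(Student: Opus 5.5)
This lemma is quoted directly from Chiu, Davida and Litow, so within the paper we would simply cite \cite[Theorem~3.3]{CDL01}. If we had to justify it ourselves, we would follow their approach: reduce the conversion to iterated sums and products of small numbers. These can be computed in $\mathsf{NC}^1$, and since the moduli have only $O(\log n)$ bits, everything stays logspace-uniform.

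\textbf{Step 1: reconstruct by CRT.} Given residues $x_i = x \bmod m_i$ with $M=\prod_i m_i$, the Chinese remainder theorem gives
\[
x \equiv \sum_{i} x_i\, c_i\, \frac{M}{m_i} \pmod{M},\qquad c_i \equiv \Bigl(\frac{M}{m_i}\Bigr)^{-1} \pmod{m_i}.
\]
Each $c_i$ can be obtained in $\mathsf{NC}^1$, because each $m_i$ has $O(\log n)$ bits. We can compute $M/m_i \bmod m_i$ as an iterated product of small residues modulo a small number. Iterated multiplication modulo a small prime is in $\mathsf{NC}^1$, for instance via discrete logarithms tabulated by brute force, which turns the product into an iterated addition. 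The inverse can then be found by exhaustive search.

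\textbf{Step 2: avoid the big-number reduction.} Computing the sum over the integers and then reducing modulo the huge number $M$ is the real difficulty. We would follow CDL's trick of working with the fractional form
\[
\frac{x}{M} = \Bigl\{\sum_i \frac{x_i c_i \bmod m_i}{m_i}\Bigr\}.
\]
Here the braces denote the fractional part. Each summand is a rational with a small denominator, and its binary expansion to polynomially many bits is computable in $\mathsf{NC}^1$ by long division with a small divisor. An iterated addition of polynomially many polynomial-length numbers lies in $\mathsf{NC}^1$. Keeping enough bits of precision, $O(\log M + \log k)$ of them, determines the fractional part, and hence $x = M\cdot(x/M)$, exactly, provided we handle rounding near integers by carrying extra guard bits. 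Finally, multiplying by $M$ is one more product. Equivalently, $M$ itself is an iterated product of $k$ small numbers, and Beame--Cook--Hoover style iterated product with CRR tricks puts this in $\mathsf{NC}^1$. This last point is again the CDL contribution.

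\textbf{Expected obstacle.} The main difficulty is the uniformity and depth of iterated multiplication, both for computing $M$ and for the exact rounding in Step 2. Here we would rely on CDL's result that iterated product is in logspace-uniform $\mathsf{NC}^1$ rather than reprove it.

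For our purposes, the consequence we need is that $\mathsf{NC}^1\subseteq\mathsf{L}$. The final conversion can therefore be composed with our catalytic subroutine, recomputing each residue $N_\ell(s,t)\bmod q_i$ on demand via Algorithm~1, within $O(\log n)$ workspace.
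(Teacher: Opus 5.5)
Your proposal matches the paper: the lemma is not proved there but imported from \cite[Theorem~3.3]{CDL01}, and it is used only through logspace-uniform $\mathsf{NC}^1\subseteq\mathsf{L}$, composed with on-demand recomputation of the residues exactly as in your last paragraph. Your optional sketch is fine as intuition but is not a standalone proof: the step ``multiply by $M$'' needs the binary representation of an iterated product of the moduli (which Beame--Cook--Hoover hardwire, making their circuits only P-uniform), and computing it logspace-uniformly is exactly what the cited result provides, so deferring to \cite{CDL01}, as you and the paper both do, is the right call.
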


Putting these ingredients together yields the catalytic endpoint of our trade-off.

\begin{theorem}\label{thm:warmup-stcon}
Given a directed graph $G$ of size $n$, vertices $s,t\in V(G)$, and a length $\ell\in[n]$, there is an algorithm running in $\mathsf{CSPACE}(O(\log n),\,O(n\log^2 n))$ that outputs $N_\ell(s,t)$.
In particular, $\mathrm{STCON}$ is contained in $\mathsf{CSPACE}(O(\log n),\,O(n\log^2 n))$.
\end{theorem}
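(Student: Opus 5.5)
The plan is to combine Claim~\ref{clm:catalytic-savitch-alg} with Chinese remaindering and Lemma~\ref{lem: CRR to binary}. First we bound the target. Since $N_\ell(s,t)\le n^\ell\le n^n$, the number has at most $n\log n$ bits. So it suffices to know $N_\ell(s,t)$ modulo pairwise coprime moduli whose product exceeds $n^n$. Take the first $m$ primes $p_1<\dots<p_m$ that exceed $n$, with $m=n$. Their product exceeds $n^n$. By the prime number theorem (or Bertrand-type bounds) all of them are at most $O(n^2\log n)$, so each has $O(\log n)$ bits. The $j$th such prime can be found in $O(\log n)$ workspace by trial division over candidates, keeping only a counter and the current candidate.

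Next we describe how the output bits are produced. Lemma~\ref{lem: CRR to binary} places the conversion in $\mathsf{NC}^1$, and $\mathsf{NC}^1$ is contained in logspace. The Chinese remainder representation has length $O(n\log n)$, which is polynomial in $n$. So there is a logspace machine $M$ that, given read access to the residue vector $(N_\ell(s,t)\bmod p_1,\dots,N_\ell(s,t)\bmod p_m)$, writes the binary expansion of $N_\ell(s,t)$. We cannot store this vector. Instead we simulate $M$ by the standard composition of space-bounded computations, treating the residue vector as a virtual input. Whenever $M$ queries bit $b$ of residue $j$, we first recompute $p_j$. We then run Algorithm~1 with $q=p_j$. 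Algorithm~1 was stated for a prime $q$ with $\log q=O(\log n)$, and $p_j$ qualifies. This run uses $O(\log n)$ workspace and the same $O(n\log^2 n)$ catalytic cells; the registers are sized $\lceil\log p_j\rceil$ bits, or uniformly $O(\log n)$ bits. We extract the required bit, discard the rest, and resume $M$.

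Correctness of the catalyst follows because every call to Algorithm~1 restores the catalytic tape completely (Claim~\ref{clm:catalytic-savitch-alg}). The outer simulation never writes to the catalytic tape otherwise, so the tape is intact at the end. The workspace used is $M$'s $O(\log n)$ bits, plus $O(\log n)$ bits to index the query and compute $p_j$, plus the $O(\log n)$ bits of one Algorithm~1 call. The STCON statement follows by computing $\sum_{\ell=0}^{n-1}N_\ell(s,t)$ and testing whether it is nonzero. Simpler still, we can test whether any $N_\ell(s,t)$ with $\ell\le n-1$ is nonzero, handling $s=t$ separately. This is a loop over $\ell$ with an $O(\log n)$ counter.

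The main obstacle is the interaction between varying moduli and the catalyst encoding. The registers are interpreted modulo different primes $p_j$ on different calls, while the paper assumes every initial bitstring is a valid residue. My plan is to rely on the Appendix~\ref{appendix:A} encoding, applied afresh for each $q=p_j$ over the same fixed block of $O(n\log^2 n)$ bits. This encoding uses $\lceil m\log q\rceil$ bits and is a reversible reinterpretation, so it can be set up and undone around each call. Then each call sees valid residues and returns the tape unchanged. The remaining points are routine: the $\mathsf{NC}^1\subseteq\mathsf{L}$ composition, and the fact that recomputation is acceptable since time is not bounded.
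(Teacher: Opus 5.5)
Your proposal is correct and follows the paper's route: you bound $N_\ell(s,t)\le n^n$, run Algorithm~1 for each of polynomially many $O(\log n)$-bit primes while reusing the restored catalyst, and convert the Chinese remainder representation to binary via Lemma~\ref{lem: CRR to binary}. The differences are minor: the paper uses the first $2n$ primes and decides STCON by adding a self-loop at $t$ and testing $N_{n-1}(s,t)\neq 0$, while your recompute-on-demand composition and per-modulus use of the Appendix~\ref{appendix:A} encoding spell out details the paper leaves implicit.
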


\begin{proof}
For $\ell\in[n]$, the number of length-$\ell$ $st$-walks is at most $n^\ell\le n^n$.
Let $p_1,p_2,\dots$ denote the primes in increasing order, and set $M := \prod_{i=1}^{2n} p_i$.
We easily find that $N_\ell(s,t) \leq n^n\leq M$.
Therefore, $N_\ell(s,t)$ has a unique Chinese remainder representation with respect to the moduli $p_1,\dots,p_{2n}$.

We compute this representation by iterating over $p_1,\dots,p_{2n}$ and running Algorithm~1 with modulus~$q=p_i$ to obtain~$N_\ell(s,t)\bmod p_i$.
The primes $p_i$ can easily be generated in $O(\log n)$ space via iterating through~$O(\log n)$ bit integers and testing primality by checking divisibility by all smaller numbers using the space-efficient division algorithm from \cite{CDL01}.
Algorithm~1 restores the catalyst after each execution, so we may reuse the same catalytic tape across all moduli.
Thus, we obtain the full Chinese remainder representation of~$N_\ell(s,t)$ using~$O(\log n)$ regular workspace and~$O(n\cdot\log^2 n)$ catalytic memory.

Finally, by Lemma~\ref{lem: CRR to binary}, this Chinese remainder representation can be converted to the binary representation of $N_\ell(s,t)$ in $\mathsf{NC}^1$ and hence in particular in logarithmic space.

To decide $\mathrm{STCON}$ within the same bounds, it suffices to add a self-loop at $t$ and output whether $N_{n-1}(s,t)$ is nonzero.
\end{proof}

\section{Our Space-space Trade-off}\label{sec: space-space trade-off}

In this section we prove our main result, the space-space trade-off for directed $st$-connectivity.
As discussed in the introduction, the proof combines our catalytic Savitch variant with an idea from~\cite{BBRS98} based on partitioning the vertex set.
Rather than iterating over all candidate vertex sequences directly, their approach iterates over sequences of parts that a walk may cross, and for each such sequence checks whether there actually exists an $st$-walk consistent with it.
We adopt this viewpoint to ``outsource'' part of the bookkeeping from the catalyst of our Savitch variant to the regular workspace.

We partition the vertices $v\in[n]$ into $k$ groups of size roughly $\frac{n}{k}$ each, depending on their residue modulo~$k$. Fixing a residue $r$, we can easily order all vertices with that residue. Note that each vertex can be written as $v=\lambda_v\cdot k+r_v$ where $\lambda_v = \lfloor\frac{v}{k}\rfloor$ and $r_v\in\{0,...,k-1\}$. As such, $v$ can be seen as the $\lambda_v$-th vertex with residue $r_v$.

Every length-$\ell$ walk $(v_0,\dots,v_\ell)$ induces a residue sequence $(r_0,\dots,r_\ell)\in\{0,\dots,k-1\}^{\ell+1}$ via $r_i \equiv v_i \pmod{k}$.
We say that a residue sequence $(r_0,\dots,r_\ell)$ is \emph{consistent} with an $st$-walk of length $\ell$ if there exists a walk $(v_0,\dots,v_\ell)$ with $v_0=s$, $v_\ell=t$, and $v_i\equiv r_i\pmod{k}$ for all $i$.
On a high level, our algorithm enumerates all residue sequences in a Savitch-style recursion using
$O(\log \ell\cdot \log k)$ bits of regular workspace, and for each fixed residue sequence checks using $O(\frac{n}{k}\cdot \log^2n)$ catalytic memory whether it is consistent with an $st$-walk.

As before, the catalytic tape consists of $\lceil\log n\rceil+2$ blocks of registers $U,V,W_1,\dots,W_{\lceil \log n \rceil}$, but now each block contains only $\lceil n/k\rceil$ registers (rather than $n$), each storing an element of $\mathbb{Z}_q$ in $O(\log q)=O(\log n)$ bits, for a suitable prime modulus $q$.
Intuitively, a block stores values indexed by the \emph{position} $\lambda$ within a residue class, rather than by the vertex label itself.
This reduction in the number of registers per block is what yields the catalytic-memory savings.

For $\ell\in\mathbb{N}^*$ and residues $i,j\in\{0,\dots,k-1\}$, we define a family of reversible register programs $P^{(\ell)}_{i,j}$ which propagate length-$\ell$ walk-information from vertices with residue $i$ to vertices with residue $j$ modulo $k$.\vspace{.1cm}\\
\noindent\textbf{Base Program.} For $\ell=1$ and $i,j\in\{0,...,k-1\}$, we set $P^{(1)}_{i,j}(U,V)$:\\
    \begin{algorithm}[H]
    \For{$(u,v)\in E$ \textbf{with} $u \equiv i \pmod{k}$ \textbf{and} $v \equiv j \pmod{k}$}{
        $V[v] \gets V[v] + U[u] \pmod{q}$
    }
    \end{algorithm}
\noindent\textbf{Recursive Program.} For $\ell\geq 2$ and $i,j\in\{0,...,k-1\}$, we set $r=\lceil\log \ell\rceil$ and define $P^{(\ell)}_{i,j}(U,V,W_1,\dots,W_r)$:\\
    \begin{algorithm}[H]
    \For{$m\in\{0,...,k-1\}$}{
        $P_{i,m}^{(\lceil\ell/2\rceil)}(U,W_{r},W_1,\dots,W_{r-1})$\qquad\tcp{Propagate walks of length $\lceil\ell/2\rceil$ from $U$ to $W_{r}$}
        $P_{m,j}^{(\lfloor\ell/2\rfloor)}(W_{r},V,W_1,\dots,W_{r-1})$\qquad\tcp{Propagate walks of length $\lfloor\ell/2\rfloor$ from $W_r$ to $V$}
        $\left(P_{i,m}^{(\lceil\ell/2\rceil)}\right)^{-1}(U,W_{r},W_1,\dots,W_{r-1})$\qquad\tcp{Uncompute register $W_{r}$}
    }
    \end{algorithm}

We will sometimes write $P^{(\ell)}_{u,v}$ with $u,v\in[n]$ being vertices rather than residues in $\{0,\dots,k-1\}$.
By this we mean
\[
P^{(\ell)}_{u,v} \;:=\; P^{(\ell)}_{r_u,r_v},
\qquad\text{where } r_u\equiv u \pmod{k}\ \text{ and }\ r_v\equiv v \pmod{k}.
\]
As in the previous section, the programs $P^{(\ell)}_{i,j}$ are reversible, and their inverses are obtained by reversing the sequence of updates and replacing additions by subtractions.
Moreover, the same argument shows that only block $V$ is modified:

\begin{claim}\label{clm:only-V-changes-tradeoff}
For every $\ell\in\mathbb{N}^*$ and $i,j\in[n]$, executing $P^{(\ell)}_{i,j}(U,V,W_1,\dots,W_r)$ leaves all registers unchanged except those in block $V$.
\end{claim}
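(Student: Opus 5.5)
The plan is to argue by strong induction on $\ell$, following the same pattern as the proof of Claim~\ref{clm:only-V-changes}. The only new ingredient is the loop over the intermediate residue $m$, which just means the compute--uncompute pattern is repeated $k$ times. Throughout, the registers of $U$, $V$ and $W_r$ touched by the program are the position-indexed registers. In the base program, the update $V[v]\gets V[v]+U[u]$ should be read as acting on $V[\lambda_v]$ and $U[\lambda_u]$. Either way, only registers of the block playing the role of $V$ are written.

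\emph{Base case $\ell=1$.} The program $P^{(1)}_{i,j}(U,V)$ consists solely of updates whose left-hand side is a register of $V$, while $U$ is only read. Hence every register outside $V$ is unchanged.

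\emph{Inductive step.} Fix $\ell\ge 2$ and assume the claim for all smaller lengths and all residue pairs. We show by an inner induction on the loop index $m$ that, at the start of each iteration, all registers outside $V$ (in particular $U$, $W_1,\dots,W_r$) hold their initial contents. Within one iteration, the first call $P^{(\lceil\ell/2\rceil)}_{i,m}(U,W_r,W_1,\dots,W_{r-1})$ changes only $W_r$, by the outer induction hypothesis. The second call $P^{(\lfloor\ell/2\rfloor)}_{m,j}(W_r,V,W_1,\dots,W_{r-1})$ changes only $V$. Since it does not modify $U$ or $W_1,\dots,W_{r-1}$, the third line runs the exact inverse of the first call on the same contents of its read blocks $U,W_1,\dots,W_{r-1}$. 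It therefore restores $W_r$ to its value before the first call. So after the iteration, only $V$ may differ, and the invariant propagates to the next $m$. After the last iteration, only $V$ has changed.

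The one point needing care is the claim that the third line really undoes the first. An inverse program restores its output block only if the blocks it reads ($U$ and the scratch blocks $W_1,\dots,W_{r-1}$) have the same contents as when the forward program ran. This holds because the middle call touches only $V$, and $V$ is distinct from $U,W_1,\dots,W_r$. The inverse program exists and performs this restoration because $P^{(\ell')}_{i,m}$ is a sequence of invertible updates $R\gets R\pm R'$ with $R\neq R'$. We also note that the recursion for $\lceil\ell/2\rceil$ and $\lfloor\ell/2\rfloor$ uses at most $r-1$ scratch blocks, since $\lceil\log\lceil\ell/2\rceil\rceil\le r-1$. The calls are therefore well-formed and never reuse $W_r$, and the argument closes. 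I expect no step to be a real obstacle; the argument is essentially Claim~\ref{clm:only-V-changes} repeated once per value of $m$.
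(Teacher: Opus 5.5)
Your proof is correct and follows the same route as the paper, which simply reuses the induction on $\ell$ from Claim~\ref{clm:only-V-changes}, with the compute--uncompute pattern now repeated once for each midpoint residue $m$. You also spell out details the paper leaves implicit: the third line undoes the first because the middle call's net effect is only on $V$, which the first call never reads; and $\lceil\log\lceil\ell/2\rceil\rceil\le r-1$, so the recursive calls never reuse $W_r$.
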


\medskip
Next we define the analogue of $\alpha_{\ell,v}(\cdot)$ for the residue-restricted programs.
Fix $\ell\in\mathbb{N}^*$ and $i,j\in[n]$.
For a vector $\tau=(\tau_0,\dots,\tau_{\lceil n/k\rceil-1})\in\mathbb{Z}_q^{\lceil n/k\rceil}$ and a vertex $v\in[n]$ with $v\equiv j \pmod{k}$, let $\alpha^{(i,j)}_{\ell,v}(\tau)\in\mathbb{Z}_q$ denote the value added to the register $V[\lambda_v]$ by executing
$P^{(\ell)}_{i,j}(U,V,W_1,\dots,W_r)$
when the input block $U$ is initialized so that for every vertex $u\equiv i\pmod{k}$, $U[\lambda_u]=\tau_{\lambda_u}$.

\begin{claim}\label{claim: alpha-ij-diff-is-number-of-walks}
For every $\ell\in\mathbb{N}^*$, $i,j\in[n]$, vectors
$\tau,b\in\mathbb{Z}_q^{\lceil n/k\rceil}$, and every vertex $v\in[n]$ with $v\equiv j\pmod{k}$,
\[
    \alpha^{(i,j)}_{\ell,v}(\tau+b)-\alpha^{(i,j)}_{\ell,v}(\tau)
    \equiv
    \sum_{\substack{u\in[n]:\\ u\equiv i\!\!\!\pmod{k}}}
    b_{\lambda_u}\,N_{\ell}(u,v)
    \pmod{q}.
\]
In particular, if $b_{\lambda_s}=1$ and $b_{\lambda}=0$ for all $\lambda\neq \lambda_s$, then
\[
    \alpha^{(s,t)}_{\ell,t}(\tau+b)-\alpha^{(s,t)}_{\ell,t}(\tau)
    \equiv
    N_{\ell}(s,t)
    \pmod{q}.
\]
\end{claim}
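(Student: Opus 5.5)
We argue by induction on $\ell$, following the proof of Claim~\ref{claim: alpha-diff is number of walks}. The one new feature is the loop over the intermediate residue $m$. Each of the $k$ iterations adds a separate contribution to $V$, and the union over $m$ of the residue classes covers all of $[n]$. That union is what turns the concatenation identity $\sum_{w} N_{\lceil\ell/2\rceil}(u,w)N_{\lfloor\ell/2\rfloor}(w,v)=N_\ell(u,v)$ back into an unrestricted walk count.

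\emph{Base case $\ell=1$.} By definition, $P^{(1)}_{i,j}$ adds $\sum_{u\equiv i,\,(u,v)\in E}\tau_{\lambda_u}$ to $V[\lambda_v]$ whenever $v\equiv j\pmod{k}$. This is linear in $\tau$, so the difference equals $\sum_{u\equiv i}b_{\lambda_u}N_1(u,v)$. One bookkeeping point needs checking: two distinct vertices $v,v'$ share the register index $\lambda_v=\lambda_{v'}$ only if they have different residues. Since the program only writes to registers of vertices with residue $j$, the quantity ``value added to $V[\lambda_v]$'' is well defined.

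\emph{Inductive step.} First we record a structural fact, proved by the same induction as Claim~\ref{clm:only-V-changes-tradeoff}: $P^{(\ell)}_{i,j}$ never reads block $V$. Block $V$ only ever appears as the output of the base program or of the second recursive call. Hence the amount added to $V$ does not depend on $V$'s current contents. Let $\sigma$ be the contents of $W_r$ at the start of the loop. By Claim~\ref{clm:only-V-changes-tradeoff}, the uncompute line restores $W_r$ to $\sigma$ at the end of each iteration, and the blocks $U,W_1,\dots,W_{r-1}$ are likewise untouched. Iteration $m$ therefore adds
\[
\alpha^{(m,j)}_{\lfloor\ell/2\rfloor,v}\bigl(\sigma+\alpha^{(i,m)}_{\lceil\ell/2\rceil,\cdot}(\tau)\bigr)
\]
to $V[\lambda_v]$, where $\alpha^{(i,m)}_{\lceil\ell/2\rceil,\cdot}(\tau)$ is the vector indexed by positions $\lambda_w$ of vertices $w\equiv m\pmod{k}$. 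Summing over $m$ gives $\alpha^{(i,j)}_{\ell,v}(\tau)$ as a sum of $k$ such terms.

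We then take the difference at $\tau+b$ and $\tau$ term by term. For fixed $m$, apply the induction hypothesis to the length-$\lfloor\ell/2\rfloor$ program with input difference $\alpha^{(i,m)}_{\lceil\ell/2\rceil,\cdot}(\tau+b)-\alpha^{(i,m)}_{\lceil\ell/2\rceil,\cdot}(\tau)$. Then apply it again to each entry of that difference. This yields
\[
\sum_{m=0}^{k-1}\ \sum_{w\equiv m}\ \sum_{u\equiv i} b_{\lambda_u}\,N_{\lceil\ell/2\rceil}(u,w)\,N_{\lfloor\ell/2\rfloor}(w,v).
\]
The double sum over $m$ and $w\equiv m$ is a sum over all $w\in[n]$, so walk concatenation collapses it to $\sum_{u\equiv i}b_{\lambda_u}N_\ell(u,v)$.

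The ``in particular'' statement follows by taking $i=r_s$ and $j=r_t$. With this choice the only vertex $u\equiv s\pmod{k}$ with $b_{\lambda_u}\neq 0$ is $s$ itself, since $\lambda_u=\lambda_s$ together with $u\equiv s$ forces $u=s$.

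The step needing most care is the justification that each loop iteration contributes independently. This requires three facts together: the uncompute restores exactly $\sigma$, the sub-programs never read $V$, and the fixed contents of the other blocks are identical across the runs at $\tau$ and at $\tau+b$. I would state the ``$V$ is write-only'' property explicitly as part of the induction hypothesis.
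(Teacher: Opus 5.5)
Your proof is correct and follows the paper's argument essentially step for step: induction on $\ell$, the decomposition $\alpha^{(i,j)}_{\ell,v}(\tau)\equiv\sum_{m}\alpha^{(m,j)}_{\lfloor\ell/2\rfloor,v}\bigl(\sigma+\alpha^{(i,m)}_{\lceil\ell/2\rceil,\cdot}(\tau)\bigr)$, two applications of the induction hypothesis, and merging $\sum_m\sum_{w\equiv m}$ into $\sum_{w\in[n]}$ before applying walk concatenation. Your extra bookkeeping makes explicit what the paper leaves implicit, without changing the route: the increment to $V$ does not depend on $V$'s contents, register indices $\lambda_v$ are unambiguous within a residue class, and the other blocks are identical in the runs at $\tau$ and at $\tau+b$.
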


\begin{proof}
We prove the claim by induction on $\ell$.

\smallskip
\noindent\emph{Base case $\ell=1$.}
The argument is identical to the proof of Claim~\ref{claim: alpha-diff is number of walks}, restricted to edges $(u,v)\in E$ with $u\equiv i\pmod{k}$.

\smallskip
\noindent\emph{Inductive step.}
Assume the statement holds for all lengths smaller than $\ell\ge 2$ and let $r=\lceil\log \ell\rceil$.
By definition, $P^{(\ell)}_{i,j}$ loops over midpoint residues $m\in\{0,\dots,k-1\}$.
Fix such an $m$.
The first subcall~$P^{(\lceil \ell/2\rceil)}_{i,m}(U,W_r,W_1,\dots,W_{r-1})$
writes into~$W_r$ information corresponding to length-$\lceil \ell/2\rceil$ walks from residue~$i$ vertices to residue $m$ vertices.
The second subcall~$P^{(\lfloor \ell/2\rfloor)}_{m,j}(W_r,V,W_1,\dots,W_{r-1})$
then adds to~$V[\lambda_v]$ the contributions of length-$\lfloor \ell/2\rfloor$ walks from residue~$m$ vertices to residue $j$ vertices, using $W_r$ as its input block.
The third subcall uncomputes $W_r$ and does not affect $V$.

Formally, let $\sigma\in\mathbb{Z}_q^{\lceil n/k\rceil}$ denote the initial contents of $W_r$.
Then
\[
    \alpha^{(i,j)}_{\ell,v}(\tau) \equiv \sum_{m=0}^{k-1} \alpha^{(m,j)}_{\lfloor \ell/2\rfloor,v}\bigl(\sigma+\alpha^{(i,m)}_{\lceil \ell/2\rceil,\cdot}(\tau)\bigr) \pmod{q},
\]
where $\alpha^{(i,m)}_{\lceil \ell/2\rceil,\cdot}(\tau)$ is the vector over $\mathbb{Z}_q^{\lceil n/k\rceil}$ whose entry at position $\lambda_u$ (for $u\equiv m\pmod{k}$) equals $\alpha^{(i,m)}_{\lceil \ell/2\rceil,u}(\tau)$.

Using this, we can compute
\begin{align*}
    \alpha^{(i,j)}_{\ell,v}(\tau+b)-\alpha^{(i,j)}_{\ell,v}(\tau)
        &\equiv \sum_{m=0}^{k-1} \Bigl( \alpha^{(m,j)}_{\lfloor \ell/2\rfloor,v}\bigl(\sigma+\alpha^{(i,m)}_{\lceil \ell/2\rceil,\cdot}(\tau+b)\bigr)
            - \alpha^{(m,j)}_{\lfloor \ell/2\rfloor,v}\bigl(\sigma+\alpha^{(i,m)}_{\lceil \ell/2\rceil,\cdot}(\tau)\bigr)\Bigr)\pmod{q}.
\end{align*}
Applying the induction hypothesis to the length-$\lfloor \ell/2\rfloor$ programs gives
\begin{align*}
\alpha^{(i,j)}_{\ell,v}(\tau+b)-\alpha^{(i,j)}_{\ell,v}(\tau)
&\equiv
\sum_{m=0}^{k-1}
\sum_{\substack{u\in[n]:\\ u\equiv m\!\!\!\pmod{k}}}
\Bigl(
\alpha^{(i,m)}_{\lceil \ell/2\rceil,u}(\tau+b)-\alpha^{(i,m)}_{\lceil \ell/2\rceil,u}(\tau)
\Bigr)\,N_{\lfloor \ell/2\rfloor}(u,v)
\pmod{q}.
\end{align*}
Applying the induction hypothesis again to each difference
$\alpha^{(i,m)}_{\lceil \ell/2\rceil,u}(\tau+b)-\alpha^{(i,m)}_{\lceil \ell/2\rceil,u}(\tau)$ yields
\begin{align*}
\alpha^{(i,j)}_{\ell,v}(\tau+b)-\alpha^{(i,j)}_{\ell,v}(\tau)
&\equiv
\sum_{m=0}^{k-1}
\sum_{\substack{u\in[n]:\\ u\equiv m\!\!\!\pmod{k}}}
\left(
\sum_{\substack{w\in[n]:\\ w\equiv i\!\!\!\pmod{k}}}
b_{\lambda_w}\,N_{\lceil \ell/2\rceil}(w,u)
\right)
N_{\lfloor \ell/2\rfloor}(u,v)
\pmod{q}\\
&\equiv
\sum_{\substack{w\in[n]:\\ w\equiv i\!\!\!\pmod{k}}}
b_{\lambda_w}
\left(
\sum_{u\in[n]} N_{\lceil \ell/2\rceil}(w,u)\,N_{\lfloor \ell/2\rfloor}(u,v)
\right)
\pmod{q}.
\end{align*}
In the last step we used that the residue classes partition $[n]$, so
$\sum_{m}\sum_{u\equiv m}(\cdot)=\sum_{u\in[n]}(\cdot)$.
Finally, by walk concatenation, $\sum_{u\in[n]} N_{\ell_1}(w,u)\,N_{\ell_2}(u,v)=N_{\ell_1+\ell_2}(w,v)$,
and since $\lceil \ell/2\rceil+\lfloor \ell/2\rfloor=\ell$, we obtain
\[
\alpha^{(i,j)}_{\ell,v}(\tau+b)-\alpha^{(i,j)}_{\ell,v}(\tau)
\equiv
\sum_{\substack{w\in[n]:\\ w\equiv i\!\!\!\pmod{k}}}
b_{\lambda_w}\,N_\ell(w,v)
\pmod{q},
\]
as desired to finish the induction.
\end{proof}

The final algorithm for our space-space trade-off looks as follows:\vspace{.1cm}

\noindent\textbf{Algorithm 2: Catalytic Savitch for Space-space Trade-off.}\\
\begin{algorithm}[H]
    \For{$c\in\{0,1\}$}{
        $U[s] \gets U[s] + c$\\
        $P^{(\ell)}_{s,t}(U,V,W_1,\dots,W_r)$\\
        $\alpha_c\leftarrow V[t]$\\
        $\left(P^{(\ell)}_{s,t}\right)^{-1}(U,V,W_1,\dots,W_r)$\\
        $U[s] \gets U[s] - c$\\
    }
    \Return{$\alpha_1-\alpha_0 \pmod{q}$}\\
\end{algorithm}

We now state and prove the analogue of Claim~\ref{clm:catalytic-savitch-alg} for the residue-restricted programs from this section.

\begin{claim}
    For any $k\in [n]$ and $\ell\in[n]$, Algorithm $2$ runs in $\mathsf{CSPACE}(O(\log \ell\cdot\log k + \log n),O(\frac{n}{k}\cdot\log^2 n))$ and outputs $N_{\ell}(s,t) \pmod{q}$.
\end{claim}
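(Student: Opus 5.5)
The proof mirrors that of Claim~\ref{clm:catalytic-savitch-alg}: first correctness, then catalytic memory and restoration, then regular workspace. The only new ingredient is the space needed to run the recursion, which now has to remember midpoint residues.

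\emph{Correctness.} Here $U[s]$ and $V[t]$ denote the registers $U[\lambda_s]$ and $V[\lambda_t]$. Let $\tau$ be the initial content of $U$ and $\tau'$ the initial content of $V[\lambda_t]$. Let $b$ be the indicator vector of position $\lambda_s$. By the definition of $\alpha^{(s,t)}_{\ell,t}$ (with the fixed contents of the other blocks), the two iterations give
\[
\alpha_0\equiv\tau'+\alpha^{(s,t)}_{\ell,t}(\tau), \qquad \alpha_1\equiv\tau'+\alpha^{(s,t)}_{\ell,t}(\tau+b) \pmod q.
\]
The second iteration starts from the same catalyst contents as the first, because the first iteration calls the inverse program and undoes the increment. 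Hence Claim~\ref{claim: alpha-ij-diff-is-number-of-walks} gives $\alpha_1-\alpha_0\equiv N_\ell(s,t)\pmod q$. This is the only place the vertex $s$ appears, and the claim's sum over $u\equiv s \pmod k$ weighted by $b_{\lambda_u}$ picks out exactly $u=s$.

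\emph{Catalyst.} There are $\lceil\log n\rceil+2$ blocks, each with $\lceil n/k\rceil$ registers of $O(\log n)$ bits, for a total of $O(\frac nk\log^2 n)$ bits. Restoration follows from Claim~\ref{clm:only-V-changes-tradeoff} together with the structure of the algorithm: every $P^{(\ell)}_{s,t}$ is followed by its inverse, and every increment $U[s]\gets U[s]+c$ is undone. The inverse of $P^{(\ell)}_{i,j}$ runs the $m$-loop in reverse order with the middle calls inverted, and it is well defined since each loop body is a compute--uncompute sandwich around an invertible call.

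\emph{Workspace.} This is the main point. We simulate the recursion with an explicit stack. At each of the $\lceil\log\ell\rceil$ levels we store the current midpoint residue $m\in\{0,\dots,k-1\}$, using $O(\log k)$ bits, and a stage indicator in $\{1,2,3\}$ plus an inverse/forward flag, using $O(1)$ bits. This totals $O(\log\ell\cdot\log k)$ bits. From the stack we can recompute, in $O(\log n)$ space, the current $(i,j)$ pair, the current length (obtained by iterating $\lceil\cdot/2\rceil$ or $\lfloor\cdot/2\rfloor$ from $\ell$ along the stored stages), which blocks act as input, output and intermediate, and whether the current call is inverted. The residue pair at each level is determined by the parent's pair and $m$: the children are $(i,m)$ and $(m,j)$. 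So only the $m$'s need storing, and the pair at depth $d$ is recovered by walking down from $(r_s,r_t)$.

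The base case scans the edge list, filters by residues, and computes $\lambda_u=\lfloor u/k\rfloor$ and $\lambda_v$ in $O(\log n)$ space. Modular arithmetic on registers and storing $\alpha_0,\alpha_1$ cost $O(\log n)$. The total regular workspace is therefore $O(\log\ell\cdot\log k+\log n)$, as claimed. The step to handle most carefully is this stack bookkeeping: we must confirm that it suffices to store only the $m$'s and stages, and not the pairs $(i,j)$ or the lengths explicitly, so that the bound stays at $O(\log\ell\log k)$ rather than $O(\log\ell\log n)$.
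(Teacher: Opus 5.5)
Your proposal is correct and follows the paper's proof. You get correctness from Claim~\ref{claim: alpha-ij-diff-is-number-of-walks} exactly as for Algorithm~1, the catalyst bound from $O(\log n)$ blocks of $\lceil n/k\rceil$ registers, and the workspace bound from storing one midpoint residue $m$ plus $O(1)$ stage bits per recursion level, and your explicit reconstruction of lengths, residue pairs and block roles from that stack fills in details the paper leaves implicit. One small correction to your closing remark: storing the residue pairs $(i,j)$ per level would also cost only $O(\log k)$ bits, so what you actually need to avoid is storing per-level lengths, which would cost $O(\log^2\ell)$ and break the bound for small $k$.
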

\begin{proof}
The correctness proof is essentially identical to the one for \cref{clm:catalytic-savitch-alg}.

For the complexity, note that each block on the catalytic tape contains $\lceil n/k\rceil$ registers over $\mathbb{Z}_q$, with~$\log q =O(\log n)$, and we use $r+2=O(\log \ell)=O(\log n)$ such blocks.
Hence the total catalyst size indeed is~$O\left(\frac{n}{k}\cdot \log^2 n\right)$.

It remains to bound the regular workspace.
The only difference to the proof of Claim~\ref{clm:catalytic-savitch-alg} is the additional workspace overhead that is obtained by including the loop over midpoint residues $m\in\{0,\dots,k-1\}$.
Keeping track of the current value of $m$ requires $O(\log k)$ bits per recursion level.
Thus the full control information can be stored in $O(r\cdot \log k)=O(\log \ell\cdot \log k)$ bits of regular workspace.
Given this control information, we can determine which residue-restricted base updates to apply and iterate through the edge list accordingly in space $O(\log n)$.
Therefore, the total regular workspace is~$O(\log \ell\cdot \log k + \log n)$, as claimed.
\end{proof}

As in Section~\ref{sec: Catalytic Savitch}, we can recover the exact value of $N_\ell(s,t)$ by evaluating its residues for sufficiently many primes to obtain a Chinese remainder representation, and then convert it to binary using Lemma~\ref{lem: CRR to binary}.
Completely analogously, we obtain our main result from the introduction, which we restate here:
\STCONTradeoff*

\section*{Acknowledgements}
The author thanks Simon Apers, Bruno Loff, Benjamin Mathieu-Bloise, Edward Pyne and Quinten Tupker for encouraging and insightful discussions.
This work has received support under the program ``Investissement d'Avenir" launched by the French Government and implemented by ANR, with the reference ``ANR‐22‐CMAS-0001, QuanTEdu-France".

\bibliographystyle{alpha}
\bibliography{refs}

@InProceedings{Wig92,
    author="Wigderson, Avi",
    title="The complexity of graph connectivity",
    booktitle="Mathematical Foundations of Computer Science 1992",
    year="1992",
    publisher="Springer",
    pages="112--132",
}

@article{Sav70,
    author = {Savitch, Walter J.},
    title = {Relationships between nondeterministic and deterministic tape complexities},
    year = {1970},
    issue_date = {April, 1970},
    publisher = {Academic Press, Inc.},
    address = {USA},
    volume = {4},
    number = {2},
    issn = {0022-0000},
    url = {https://doi.org/10.1016/S0022-0000(70)80006-X},
    doi = {10.1016/S0022-0000(70)80006-X},
    journal = {J. Comput. Syst. Sci.},
    month = apr,
    pages = {177–192},
    numpages = {16}
}

@article{BBRS98,
    author = {Barnes, Greg and Buss, Jonathan F. and Ruzzo, Walter L. and Schieber, Baruch},
    title = {A Sublinear Space, Polynomial Time Algorithm for Directed s-t Connectivity},
    journal = {SIAM Journal on Computing},
    volume = {27},
    number = {5},
    pages = {1273-1282},
    year = {1998},
    doi = {10.1137/S0097539793283151}
}

@inproceedings{CP25,
    title={Efficient Catalytic Graph Algorithms},
    author={Cook, James and Pyne, Edward},
    booktitle={17th Innovations in Theoretical Computer Science Conference (ITCS 2026)},
    volume={362},
    pages={43:1--43:22},
    year={2026},
    organization={Schloss Dagstuhl--Leibniz-Zentrum f{\"u}r Informatik},
    doi={10.4230/LIPIcs.ITCS.2026.43}
}

@article{CDL01,
    author = {Chiu, Andrew and Davida, George and Litow, Bruce},
    title = {Division in logspace-uniform $\mbox{NC}^1$},
    journal = {RAIRO. Theoretical Informatics and Applications},
    pages = {259--275},
    year = {2001},
    publisher = {EDP-Sciences},
    volume = {35},
    number = {3},
    mrnumber = {1869217},
    zbl = {1014.68062},
    language = {en},
    url = {https://www.numdam.org/item/ITA_2001__35_3_259_0/}
}

@inproceedings{BCKLS14,
    author = {Buhrman, Harry and Cleve, Richard and Kouck\'{y}, Michal and Loff, Bruno and Speelman, Florian},
    title = {Computing with a full memory: catalytic space},
    year = {2014},
    isbn = {9781450327107},
    publisher = {Association for Computing Machinery},
    address = {New York, NY, USA},
    url = {https://doi.org/10.1145/2591796.2591874},
    doi = {10.1145/2591796.2591874},
    booktitle = {Proceedings of the Forty-Sixth Annual ACM Symposium on Theory of Computing},
    pages = {857–866},
    numpages = {10},
    location = {New York, New York},
    series = {STOC '14}
}

@inproceedings{CM24,
    author = {Cook, James and Mertz, Ian},
    title = {Tree Evaluation Is in Space $O(\log n \cdot \log \log n)$},
    year = {2024},
    isbn = {9798400703836},
    publisher = {Association for Computing Machinery},
    address = {New York, NY, USA},
    url = {https://doi.org/10.1145/3618260.3649664},
    doi = {10.1145/3618260.3649664},
    booktitle = {Proceedings of the 56th Annual ACM Symposium on Theory of Computing},
    pages = {1268–1278},
    numpages = {11},
    location = {Vancouver, BC, Canada},
    series = {STOC 2024}
}

@inproceedings{Will25,
    author = {Williams, R. Ryan},
    title = {Simulating Time with Square-Root Space},
    year = {2025},
    isbn = {9798400715105},
    publisher = {Association for Computing Machinery},
    address = {New York, NY, USA},
    url = {https://doi.org/10.1145/3717823.3718225},
    doi = {10.1145/3717823.3718225},
    booktitle = {Proceedings of the 57th Annual ACM Symposium on Theory of Computing},
    pages = {13–23},
    numpages = {11},
    location = {Prague, Czechia},
    series = {STOC '25}
}

@article{CMMBS12,
    author  = {Cook, Stephen A. and McKenzie, Pierre and Wehr, Dustin and Braverman, Mark and Santhanam, Rahul},
    title   = {Pebbles and Branching Programs for Tree Evaluation},
    journal = {ACM Transactions on Computation Theory},
    volume  = {3},
    number  = {2},
    year    = {2012},
    doi     = {10.1145/2077336.2077337}
}

@article{CDKMR26,
    title={Frontier Space-Time Algorithms Using Only Full Memory},
    author={Chmel, Petr and Dudeja, Aditi and Koucký, Michal and Mertz, Ian and Rajgopal, Ninad},
    journal={arXiv preprint},
    year={2026}
}

\appendix

\section{Dealing with Invalid Initial Registers}\label{appendix:A}
Suppose a register program uses $m$ registers $R_1,\dots,R_m$, each storing a value $r_i\in\mathbb{Z}_q$.
We can encode the entire register configuration as a single integer
\[
    x \;=\; r_1 + r_2\cdot q + \dots + r_m \cdot q^{m-1} \;\in\; \mathbb{Z}_{q^m}.
\]
This encoding uses only $\lceil m\log q\rceil$ bits on the catalytic tape and is bijective between $\mathbb{Z}_q^m$ and $\mathbb{Z}_{q^m}$.

As for other encodings, it may happen that the catalytic tape initially encodes an invalid value, in this case this would be an integer $x\geq q^m$.
However, since we store all registers in a single number, we can easily handle this case with just one bit of extra memory: First flip the most significant bit of the catalyst from~$1$ to~$0$, and remember that you did so with one bit of memory.
This flip forces the encoded value into the range~$[0,q^m)$.
Then run the computation with the new valid initial tape, and finally flip the bit back to~$1$.

Observe that using space-efficient division with remainder from~\cite{CDL01}, one can extract each~$r_i\in\mathbb{Z}_q$ from $x\in\mathbb{Z}_{q^m}$ by computing
\[
    r_i \;=\; \left\lfloor \frac{x}{q^{i-1}} \right\rfloor \bmod q.
\]

We now show how to simulate register updates in this encoding.

\begin{lemma}
Let $x$ encode $(r_1,\dots,r_m)$ as above.
There is a procedure using $O(\log q + \log m)$ additional workspace that transforms $x$ into the encoding of the updated register tuple obtained by performing either
\[
    R_i \leftarrow R_i \pm R_j \pmod{q}
    \qquad\text{or}\qquad
    R_i \leftarrow R_i \pm c \pmod{q},
\]
for a constant $c\in \mathbb{Z}_q$, while all other registers remain unchanged.
\end{lemma}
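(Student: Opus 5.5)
The update changes only the $i$th digit of $x$ in base $q$. The plan is to express this as adding or subtracting a multiple of $q^{i-1}$ modulo $q^m$, and then to carry out that addition in place on the binary representation of $x$.

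\emph{Step 1: reduce to a single digit change.} Write $r_i' := r_i \pm r_j \bmod q$, or $r_i \pm c \bmod q$ in the constant case. Let $\delta := r_i' - r_i \in \{-(q-1),\dots,q-1\}$, computed as an integer. Then the new encoding is
\[
x' = x + \delta\cdot q^{i-1}.
\]
No carry escapes digit $i$, because $r_i'\in\{0,\dots,q-1\}$. Hence $x'$ again lies in $[0,q^m)$.

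To find $\delta$, we extract $r_i$ and $r_j$ from $x$ using $r_i=\lfloor x/q^{i-1}\rfloor \bmod q$. Division is in logspace by \cite{CDL01}. The numbers $x$ and $q^{i-1}$ are huge, but both are available bitwise: $x$ is read directly from the catalytic tape, and the bits of $q^{i-1}$ are recomputed on demand by iterated multiplication, which is also logspace by \cite{CDL01}. This yields $r_i$, $r_j$ and $\delta$ in $O(\log q)$ bits of workspace. Crucially, we store only these small values, never a copy of $x$.

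\emph{Step 2: perform the addition in place.} We add $\delta\cdot q^{i-1}$, or subtract $|\delta|\cdot q^{i-1}$, to the binary string of $x$. This is a ripple-carry pass from the least significant bit upward. At position $p$ we compute bit $p$ of $|\delta|\cdot q^{i-1}$ on the fly in logspace. We keep a one-bit carry (or borrow) and a counter $p$ of $O(\log(m\log q))=O(\log m+\log\log q)$ bits, and we overwrite bit $p$ of $x$ as we go.

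There is a subtlety here. Computing bit $p$ of $|\delta| q^{i-1}$ does not depend on $x$, so overwriting $x$ while sweeping is harmless. We also need to know $\delta$ before modifying anything, and Step 1 already fixed it.

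\emph{Step 3: reversibility.} The inverse operation ($\mp$) is just the same procedure with the opposite sign, so the register-program semantics are simulated exactly. Total workspace is $O(\log q+\log m)$, plus whatever logspace is used internally by the \cite{CDL01} routines on inputs of length $O(m\log q)$. Those routines cost $O(\log m+\log\log q)$.

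\emph{Main obstacle.} The hard point is the space accounting of the arithmetic oracles. We need bit $p$ of $q^{i-1}$, and bit $p$ of its product with a small number, in space logarithmic in the bit-length $m\log q$ of $x$. Likewise, division of the very long $x$ by $q^{i-1}$ must run in that space. I would handle this by composing the logspace-uniform $\mathsf{NC}^1$ algorithms of \cite{CDL01} for iterated product and division. Composition of logspace transducers keeps space $O(\log(m\log q))$, and this is within $O(\log q+\log m)$.
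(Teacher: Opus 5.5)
Your proof is correct and follows the paper's approach: extract $r_i$ (and $r_j$) into workspace, then update $x$ in place by a multiple of $q^{i-1}$. The only difference is that the paper first adds $\pm u\cdot q^{i-1}$ and then cancels any carry/borrow with $\mp q^{i}$, whereas you fold this correction into $\delta=r_i'-r_i$ and do a single addition that never leaves $[0,q^m)$; you also spell out the in-place ripple-carry and the space accounting, which the paper leaves implicit.
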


\begin{proof}
First, we extract the value of $r_i$, and also of $r_j$ if we are performing an update of the first type.
We store these values in the regular workspace using $O(\log q)$ bits.

We then update $x$ in-place on the catalytic tape via
\[
    x \leftarrow x \pm u\cdot q^{i-1} \pmod{q^m},
\]
where $u\in\{0,...,q-1\}$ denotes either the value of $r_j$ or the constant $c$.
Denote the resulting value by $x'$ and write its decomposition as
\[
    x' = r_1' + r_2' \cdot q + \dots + r_m'\cdot q^{m-1}.
\]
It is easy to verify that $r_i'$ has the correct value $r_i \pm u \pmod{q}$.

However, if $r_i + u \geq q$ or $r_i - u <0$, then the update produces a carry, and $r_{i+1}'$ will be $r_{i+1}\pm 1$.
Since we have access to both $r_i$ and $u$ on our workspace, we can easily detect this though, and correct the carry by updating $x$ in-place (if-needed) via
\[
    x \leftarrow x \mp q^{i} \pmod{q^m}.
\]

The net effect is a correct simulation of the register update $r_i \leftarrow r_i \pm u \pmod{q}$ on the joint encoding $x$, using $O(\log q + \log m)$ additional workspace.
\end{proof}

\end{document}